\newcommand{\bc}[1]{\ensuremath{\overline{#1}^{_{bip}}}}
\newcommand{\D}[1]{MBS_{#1}}
\newcommand{\DC}[1]{MBS^C_{#1}}
\title{Maximum Biclique for Star$_{1,2,3}$-free and Bounded Bimodularwidth Twin-free Bipartite Graphs\thanks{Supported by French ANR 20-CE23-0002 Coregraphie}}
\titlerunning{Maximum Biclique for $Star_{123}$-free and Bounded bmw Bipartite Graphs}
\author{Fabien de Montgolfier
  \and Renaud Torfs}
\authorrunning{F. de Montgolfier and R. Torfs}
\institute{IRIF, Université Paris Cité, France.
  \email{fm@irif.fr} \email{torfs@irif.fr}  \\
}
\begin{document}

\maketitle              
\begin{abstract}
  There are three usual definitions of a maximum bipartite clique (biclique) in a bipartite graph : either maximizing the number of vertices, or of edges, or finding a maximum balanced biclique. The first problem can be solved in polynomial time, the last ones are NP-complete. Here we show how these three problems may be efficiently solved for two classes of bipartite graphs:  $Star_{123}$-free twin-free graphs, and bounded bimodularwidth twin-free graphs, a class that may be defined using bimodular decomposition. Our computation requires $O(n^2)$ time and requires a decomposition is provided, which takes respectively $O(n+m)$ and $O(mn^3)$ time.

\end{abstract}
%
%

\section{Introduction}
This paper addresses the problem of computing a \emph{maximum} bipartite clique (biclique) in a bipartite graph. While the problem of computing a maximum clique in a graph is well defined, the one of a maximum biclique in a bipartite graph, however, has (at least) three nonequivalent definitions :
\begin{itemize}
\item either maximizing the number of vertices (Vertex-Maximum Biclique),
\item or maximizing the number of edges (Edge-Maximum Biclique),
\item  or finding a biclique of maximum cardinality with the same number of white and of black vertices (Maximum Balanced Biclique)
\end{itemize}
A variation of the Vertex-Maximum Biclique found sometimes is that it must contains an edge (Non-trivial Vertex-Maximum Biclique).
Among the many applications of these problems,  we may cite gene expression analysis~\cite{yetanotherbio} and other various problems from Biology (see~\cite{pavlopoulos2018bipartite} for a survey),
anomaly detection in crowdsourcing~\cite{collusion} or in social networking~\cite{Beutel2013CopyCatchSG},
modeling complex networks of various kinds~\cite{GUILLAUME2006795}, and clustering~\cite{BicliqueComm}.

Garey and Johnson address the first and third problem (\cite{GareyJ79}, problem GT24) proving that Maximum Balanced Biclique is NP-complete (Maximum Clique reduces to it) and notice that the vertex-maximum biclique is polynomial.
Indeed, by König's theorem, in a bipartite graph, a maximum matching has the same size than a minimum vertex cover, and the vertices not in a minimum vertex cover form an maximum independent set. Since the bipartite complement transforms independent sets into bicliques, one just has to run a maximum matching algorithm on the bipartite complement to solve the vertex-maximum biclique problem. Since the bipartite complement of a sparse bipartite graph may have $\Omega(n^2)$ edges, Hopcroft-Karp algorithm runs in $O(n^{2.5})$ and is, as far as we know, the fastest algorithm.
Finally, the proof that the second problem is also NP-complete was published in 2003 only by Peeters~\cite{PEETERS2003651}.

Approximation is hard: Manurangsi proved that, assuming Small Set Expansion Hypothesis  and that NP $\ne$ BPP, for every $\epsilon >0$, no polynomial time algorithm gives $n^{1 - \epsilon}$-approximation for the edge-maximum biclique~\cite{manurangsi2017inapproximability}.
Feige proved that that there is a constant
$\delta > 0$ such that the maximum balanced biclique problem cannot be approximated within a ratio
below $n^\delta$, under the random 3-SAT hardness hypothesis~\cite{Feige02}. Khot proved, assuming that NP $\not\subseteq\cup_{\epsilon > 0}$
BPTIME($2^{n^\epsilon}$), that maximum balanced biclique has no polynomial time approximation scheme (PTAS)~\cite{Khot06}. 
Dawande~\emph{et al.}  survey the weighted and the multipartite extensions of maximum biclique, showing most of them are NP-complete~\cite{DAWANDE2001388}.

In the present paper, we use a variation of the modular decomposition, suitable for bipartite graphs, the so-called \emph{bimodular decomposition}~\cite{WG04}, to solve efficiently these three problems (and in fact more, as we can find any given size biclique if it exists) on two classes of bipartite graphs that behave well with respect to bimodular decomposition.

The paper is structured as follow: first we define bimodular decomposition, then we introduce the dynamic programming tool we use, the Maximum Bisize Set (MBS). In the fourth section we study how the MBS behave with respect to the recursive and base cases of bimodular decomposition. Finally we present two $O(n^2)$-time algorithms solving the maximum biclique problems for the two graph classes we consider: twin-free $Star_{1,2,3}$-free graphs~\cite{lozin2002bipartite}, and twin-free Bounded Bimodularwidth bipartite graphs, a class we introduce here. These algorithms need a decomposition tree to be provided, that can be computed in $O(n+m)$ time in the first case~\cite{quaddoura2002linear}
and in $O(mn^3)$ for  all bipartite graphs \cite{WG04}.

\section{Bimodular decomposition}
In this section, we present our notations, and then four decomposition operations, and use them to present the \emph{bimodular decomposition}.
Thorough this paper, $G=(B\uplus W,E)$ is a bipartite graph where the partition between the white $W$ and black $B$ vertex-set is given. 
We denote $V=B\cup W$, $n=|V|$ and $m=|E|$.
For $X\subset V$ we denote $B_X=B \cap X$ and $W_X=W\cap X$. 
The \emph{bipartite complement} of $G=(B\uplus W,E)$ is $\bc{G}=(B\uplus W,(B\times W)-E)$.
For two disjoint subsets $X,Y\subset V$, we say
\begin{itemize}
\item $X$ is \emph{nonadjacent} to $Y$ if there is no edge between a vertex from $X$ and a vertex from $Y$.
\item $X$ is \emph{left adjacent} to $Y$ when every black vertex from $X$ is adjacent to every white vertex from $Y$ and no white vertex from $X$ has a black neighbor in $Y$.
\item $X$ is \emph{fully adjacent} to $Y$ when every black (resp. white)  vertex from $X$ is adjacent to every white (resp. black) vertex from $Y$.
\end{itemize}
  A vertex $x$ is \emph{isolated} if $\{x\}$ is nonadjacent to $V-x$ and \emph{universal} if  $\{x\}$ is fully adjacent to $V-x$.

Let us suppose that $V$ is partitioned into nonempty and disjoint  $V_1...V_k$, $k\ge 2$. Let us denote $W_i=W\cap V_i$, $B_i=B\cap V_i$, and $G_i=G[V_i]$.
\begin{itemize}
\item If $\forall i\ne j$ $V_i$ is \emph{nonadjacent} with $V_j$, and $k$ is maximum among partitions having this property, then we say that $G$ admits a \emph{Parallel decomposition} into $G_1...G_k$. Notice $G_i$  is a connected component of $G$.
\item If $\forall i\ne j$ $V_i$ is \emph{fully adjacent} with $V_j$, and $k$ is maximum among partitions having this property, then we say that $G$ admits a \emph{Series decomposition} into $G_1...G_k$. Notice $G_i$ is a connected component of $\bc{G}$.
\item If  $\forall i<j$ $V_i$ is \emph{left-adjacent} with $V_j$, and $k$ is maximum among partitions having this property, then we say that $G$ admits a \emph{K+S decomposition} into $G_1...G_k$. Notice that for each $i$,
  $B_1\cup ...\cup B_i \cup W_{i+1}\cup...\cup W_k$ is a biclique, while
  $W_1\cup ...\cup W_i \cup B_{i+1}\cup...\cup B_k$ is a stable set, hence the ``K+S'' decomposition name.
\end{itemize}
  In the case $G$ admits both a K+S and a Parallel (resp. Series) decomposition, then $G$ has an isolated (resp. universal) vertex. For shake of unicity, in this case we define that $G$ admits only a K+S decomposition.

  A bipartite graph is twin-free when two vertices can not have the same neighborhood. A $Star_{123}$, also called \emph{Skew Star}, is seven-vertex graph consisting of  a path of six vertices plus a pending vertex adjacent to the third vertex of that path.
  
\begin{theorem}[Lozin~\cite{lozin2002bipartite}]\label{thlozin}
  Let $G$ be a twin-free bipartite graph without induced $Star_{1,2,3}$. Then
  \begin{itemize}
  \item either $G$ admits a Parallel decomposition,
  \item  or $G$ admits a Series decomposition,
  \item or $G$ admits a K+S decomposition,
  \item or $G$ is $K_{1,3}$-free
  \item or $\bc{G}$ is $K_{1,3}$-free
  \end{itemize}
  \end{theorem}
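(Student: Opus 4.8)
\medskip
\noindent\textbf{Proof plan.}
The plan is to read the statement as a structural dichotomy and argue by contradiction, after first disposing of the degenerate conclusions. If $G$ is disconnected it admits a Parallel decomposition and we are done; dually, if $\bc{G}$ is disconnected then $G$ admits a Series decomposition. Hence I may assume $G$ is connected \emph{and} co-connected (i.e.\ $\bc{G}$ connected) and, assuming the third conclusion also fails, that $G$ admits no K+S decomposition. Call such a $G$ \emph{prime}. It then suffices to show that every twin-free, $Star_{1,2,3}$-free prime graph is $K_{1,3}$-free or has $\bc{G}$ $K_{1,3}$-free.

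The crucial simplification is a degree reformulation of the two surviving alternatives. In a bipartite graph the three leaves of an induced $K_{1,3}$ lie on the side opposite its centre and are therefore automatically pairwise nonadjacent, so an induced $K_{1,3}$ is simply a vertex of degree at least $3$. Thus $G$ is $K_{1,3}$-free iff $\Delta(G)\le 2$, while $\bc{G}$ is $K_{1,3}$-free iff every vertex is nonadjacent to at most two vertices of the opposite colour. I would also record that the hypothesis and all five conclusions are invariant under bipartite complementation (up to reversing the order of the K+S parts), because $Star_{1,2,3}$ is self-bipartite-complementary: its bipartite complement is again an induced $Star_{1,2,3}$. This symmetry halves the work and lets me treat $G$ and $\bc{G}$ interchangeably.

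With these reductions, I would assume for contradiction that a prime, twin-free, $Star_{1,2,3}$-free $G$ is \emph{neither} $K_{1,3}$-free \emph{nor} has $\bc{G}$ $K_{1,3}$-free. By the reformulation, $G$ then contains a claw --- a vertex $v$ with three neighbours $a,b,c$ --- and $\bc{G}$ contains a claw, i.e.\ a vertex $u$ nonadjacent to three opposite-colour vertices $x,y,z$. Using connectivity of both $G$ and $\bc{G}$, I would route a shortest (hence induced) path between the dense witness $\{v,a,b,c\}$ and the sparse witness $\{u,x,y,z\}$, and then show that the union of these two local patterns, glued along the path, always exhibits an induced $Star_{1,2,3}$. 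When some adjacencies coincide and no skew star appears, the degeneracy must instead produce either two vertices with identical neighbourhoods, contradicting twin-freeness, or a partition of the neighbourhoods into a ``biclique side'' and a ``stable side'' that is left-adjacent, hence a K+S decomposition, contradicting primality. Twin-freeness is what forbids collapsing the configuration, and the absence of a K+S split is what prevents the neighbourhoods from lining up in the nested/linear fashion that would otherwise avoid the skew star.

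The main obstacle is exactly this final case analysis: one must verify that for every way the six-to-seven relevant vertices (and the intermediate vertices of the connecting path) can attach to one another, the configuration terminates in an induced $Star_{1,2,3}$, a pair of twins, or a left-adjacent split. Keeping this analysis finite and ensuring each branch closes on one of the three forbidden outcomes is where the real effort lies; the degree reformulation and the complementation symmetry are the two devices that keep the number of branches manageable.
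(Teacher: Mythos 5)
This statement is Lozin's theorem, which the paper cites from \cite{lozin2002bipartite} and does not prove; there is therefore no internal proof to compare against, and your proposal has to stand on its own as a proof of Lozin's result. Your preliminary reductions are correct and well chosen: disconnectedness of $G$ (resp.\ of \bc{G}) gives the Parallel (resp.\ Series) case; in a bipartite graph an induced $K_{1,3}$ is the same thing as a vertex of degree at least $3$, so the last two alternatives are degree conditions; and the skew star is indeed isomorphic to its own bipartite complement, so the whole situation is symmetric under bipartite complementation (with the K+S order reversed). These observations are genuinely part of any proof of this theorem.

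The gap is that after these reductions, everything that remains is asserted rather than proved, and what remains is the entire content of the theorem. The claim that a claw of $G$ and a claw of \bc{G}, ``glued along a shortest path,'' always forces an induced $Star_{1,2,3}$, a pair of twins, or a left-adjacent split is exactly the statement being proved, restated in local form; no branch of the promised case analysis is actually carried out. Moreover, the analysis as described is not finite in any obvious way: the connecting path can be arbitrarily long, the two witness configurations can overlap or share vertices, and the adjacencies between the six peripheral witness vertices and the internal path vertices are unconstrained, so one needs an inductive or extremal mechanism (or, as in Lozin's actual proof, a careful analysis of how neighbourhoods of vertices in a prime graph must be ordered) to close the argument --- a mechanism your plan does not supply. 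As written, the proposal is a plausible strategy document, not a proof; to repair it you would either have to execute the full case analysis with an explicit induction on the path length, or follow Lozin's route, which builds the decomposition directly from the neighbourhood structure of prime twin-free graphs rather than by contradiction from two local witnesses.
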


Notice the base cases are that $G$ is either a  path, or a cycle, or the bipartite complement of a path or a cycle. This theorem was later extended to exclude no induced graph. The key idea to do so is to use  bimodules. A \emph{bimodule} is a set $M$ of vertices such that every vertex of $V-M$ is either nonadjacent or fully adjacent to $M$. A bimodule is trivial when it has at most one black vertex and at most one white vertex. In \cite{WG04} is proven that, when a graph has no Parallel nor Series nor K+S decomposition, then two maximal nontrivial bimodules $M$ and $M'$ either are disjoint or overlap on only one so-called \emph{augmenting vertex} $v$, that is either nonadjacent or fully adjacent to $M$ and $M'$. Removing the (at most 2) augmenting vertices from each maximal nontrivial bimodule yields the maximal \emph{canonical} bimodules, that do not overlap. Therefore the vertex-set of a nontrivial twin-free bipartite graph with no Parallel nor Series nor K+S decomposition can be uniquely partitioned into  maximal canonical bimodules, plus the other vertices (each vertex not in a maximal canonical bimodule forming a singleton class), yielding the Prime  decomposition case. 

\begin{theorem}[Fouquet \emph{et. al}~\cite{WG04}]\label{thfm}
   Let $G$ be a twin-free bipartite graph. Then
  \begin{itemize}
  \item either $G$ admits a Parallel decomposition,
  \item  or $G$ admits a Series decomposition,
  \item or $G$ admits a K+S decomposition,
  \item or $G$ admits a Prime decomposition (into maximal canonical bimodules and singleton vertices),
  \item or $G$ has only one vertex.
  \end{itemize}
  \end{theorem}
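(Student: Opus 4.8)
The plan is to follow the template of the classical modular decomposition theorem, replacing modules by bimodules and adding the degenerate ``chain'' case peculiar to the bipartite setting. I will first dispose of the base case $n=1$, and then, for $n \geq 2$, produce a canonical partition of $V$ into bimodules whose parts are pairwise nonadjacent (Parallel), pairwise fully adjacent (Series), linearly left-adjacent (K+S), or otherwise prime (Prime).

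First I would record the closure properties of bimodules that make any partition theorem possible: the intersection of two bimodules is again a bimodule (passing to a subset preserves both the nonadjacent and the fully adjacent relation of every outside vertex), and two bimodules $M,M'$ that \emph{overlap} (meet while neither contains the other) combine in a controlled way. In the classical undirected setting overlapping modules have their union, difference and symmetric difference all modular, which forces the maximal proper modules to be pairwise disjoint. The bipartite analogue is the same \emph{except} for the single exceptional configuration flagged in the text: two maximal nontrivial bimodules may share exactly one augmenting vertex $v$ that is nonadjacent or fully adjacent to both. I would take the overlap lemma of~\cite{WG04} (stated above) as the technical heart and build the rest of the argument around it.

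With the closure results in hand, the argument splits on connectivity, under the convention that a K+S witness is always preferred. If $G$ is disconnected, its connected components are pairwise nonadjacent bimodules and, taken as the finest such partition, give the Parallel decomposition; dually, if $\bc{G}$ is disconnected its co-components are pairwise fully adjacent bimodules and give the Series decomposition. It then remains to treat the case where $G$ and $\bc{G}$ are both connected and no left-adjacency ordering of the parts exists, so that no Parallel, Series, or K+S decomposition is present and the overlap lemma applies: any two maximal nontrivial bimodules are disjoint or meet in a single augmenting vertex. I would verify that deleting the (at most two) augmenting vertices from each maximal nontrivial bimodule still leaves a bimodule --- immediate, since a subset inherits each outside vertex's relation --- and that the resulting \emph{canonical} bimodules are then pairwise disjoint. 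Completing them with singletons for the remaining vertices yields the Prime partition, and maximality shows its quotient admits no nontrivial bimodule, i.e.\ is prime.

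The main obstacle is the overlap lemma itself: proving that maximal nontrivial bimodules can meet only in a lone augmenting vertex, rather than in an arbitrary shared bimodule. This is precisely where the bipartite colouring breaks the clean undirected behaviour (the union of two overlapping bimodules need no longer be a bimodule once their intersection is colour-deficient), and where twin-freeness enters --- without it, vertices sharing a neighbourhood form nontrivial bimodules that proliferate and blur the maximal structure. A secondary, bookkeeping-level difficulty is confirming that the five cases are jointly exhaustive and, under the stated convention (K+S absorbing the isolated- and universal-vertex overlaps with Parallel and Series), mutually exclusive, so that the decomposition is genuinely canonical. I expect the connectivity dichotomy and the ``subset of a bimodule is a bimodule'' step to be routine, with all the real work concentrated in the overlap lemma.
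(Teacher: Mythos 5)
Your proposal matches the paper's treatment: this theorem is cited from Fouquet \emph{et al.}~\cite{WG04}, and the paper offers no self-contained proof, only the same sketch you give --- handle Parallel/Series/K+S via (co-)connectivity and the stated convention, then invoke the overlap lemma (maximal nontrivial bimodules are disjoint or share a single augmenting vertex), strip the augmenting vertices to obtain non-overlapping canonical bimodules, and complete with singletons to form the Prime partition. Since you, like the paper, defer the overlap lemma to~\cite{WG04}, your outline is essentially the same argument; just note that your closing remark that a ``subset of a bimodule is a bimodule'' is false in general (vertices of $M\setminus S$ are unconstrained with respect to $S$), and that removing augmenting vertices preserves bimodularity only because those vertices are, by definition, nonadjacent or fully adjacent to the rest of the bimodule.
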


Both Theorems~\ref{thlozin} and \ref{thfm} allow to define a decomposition tree, whose root is labeled by the decomposition case that applies (Series, Parallel, K+S or Prime), and internal nodes correspond to the decomposition of the graphs induced by each component (or maximal canonical bimodule) of the corresponding case. The leaves are the base cases (single vertices or $K_{1,3}$-free graphs or their bipartite complement). For Theorem~\ref{thfm} this tree is called \emph{canonical decomposition tree} and, for each node,  the leaves of the subtree rooted at that node is a bimodule. These bimodules form a  family of non-overlapping bimodules called the canonical bimodules. If a graph has no nontrivial bimodule, it is called \emph{bimodule-prime} and has a Prime decomposition into a trivial partition of $n$ singletons.

\begin{definition}[bimodularwidth]
	The \emph{bimodularwidth} of a twin-free bipartite graph  $G$ is the largest number of children (counting leaves) of a Prime node of the canonical decomposition, or is $2$ if that tree has no Prime node.
\end{definition}

\begin{figure}[h!]\vspace{-.5cm}
	\begin{center}
		\includegraphics[width=.8\textwidth]{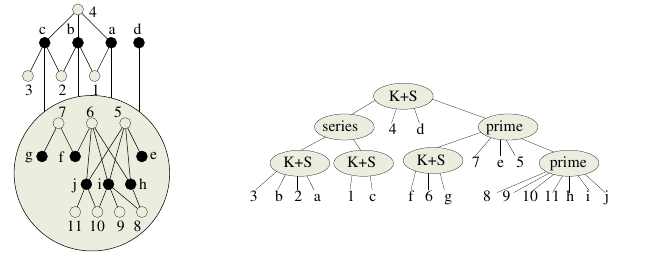}
	\end{center}\vspace{-.5cm}
	\caption{A bipartite graph and its bimodular decomposition tree. Bimodularwidth is 7}\vspace{-1cm}
\end{figure}

\section{Maximum Biclique Size set}
\begin{definition}[Domination]
Given two integer couples $(x,y)$ and $(x',y')$, $(x,y)$ \emph{dominates} $(x',y')$ when $x'\le x$ and $y'\le y$. Domination is \emph{strict} when $x\ne x'$ or $y\ne y'$. For $X\subset   \mathbb{N}^2$, let $Dom(X)$ be the couples of $X$ not strictly dominated by another couple of $X$.
\end{definition}

\begin{definition}[Bisize and maxbisize]
  $(b,w)\in \mathbb{N}^2$ is a \emph{bisize of $G$}  if $G$ contains a biclique of $b$ black and $w$ white vertices.
    A bisize $(b,w)$ of $G$ is a \emph{maxbisize of $G$} if it not strictly dominated by another bisize of $G$.
    A bisize or a maxbisize $(b,w)$ of $G$ is \emph{trivial} when $b=0$ or $w=0$.
  \end{definition}
\begin{remark}
   For each maxbisize there is a maximal biclique of that size, but the converse is not true.
  If a bisize $(b,w)$ dominates $(b',w')$,  $(b',w')$ also is a bisize, by inclusion of bicliques.
 \end{remark}
\begin{definition}[maxbisize set and operators]
  \begin{itemize}
  \item The \emph{maxbisize set} of $G$, denoted $\D G$, is the set of all maxbisizes of $G$.
  \item For $M\subset V$,    $\D M$ denotes the maxbisize set of $G[M]$.
\item 	Let $(b,w)+(b',w')$ be $(b+b',w+w')$. For two maxbisize sets $X$ and $Y$ let $X \oplus Y$ be $ \{ x+y\ | x \in X\ y\in Y\}$. If one side is empty, define  $\emptyset\oplus X=X\oplus \emptyset=X$.
\item Let $\rightarrow_w^zX$ be $\{(x,y+z) | (x,y) \in X\}$ and let $\rightarrow_b^z$ be $\{(x+z,y) | (x,y) \in X\}$
  \end{itemize}
\end{definition}

We may reduce our three biclique problems to computing the maxbisize set:
\begin{theorem}\label{th3cl}
  
  Let $G$ be a bipartite graph
  \begin{itemize}
  \item The vertex-maximum biclique has $\max_{(x,y) \in \D G} x+y$ vertices
  \item The edge-maximum biclique has    $\max_{(x,y) \in \D G} x*y$ edges
  \item The maximum balanced biclique has     $\max_{(x,y) \in \D G} min(x,y)$ vertices of each color
  \end{itemize}
\end{theorem}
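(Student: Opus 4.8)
The plan is to reduce each of the three biclique problems to maximizing a monotone objective over the set of \emph{all} bisizes of $G$, and then to show that restricting this maximization to the maxbisizes $\D{G}$ changes nothing. First I would record the elementary dictionary between bicliques and bisizes. A biclique with $b$ black and $w$ white vertices is a complete bipartite subgraph, so it has exactly $b+w$ vertices and exactly $b\cdot w$ edges, and conversely every bisize $(b,w)$ is witnessed by such a biclique. Hence the vertex-maximum and edge-maximum objectives are literally $\max(b+w)$ and $\max(b\cdot w)$ over all bisizes $(b,w)$ of $G$. The balanced case needs one extra observation: from a biclique of bisize $(b,w)$ one may discard surplus vertices on the larger side to obtain a balanced biclique with $\min(b,w)$ vertices of each colour, and conversely a balanced biclique of size $k$ per colour is precisely a bisize $(k,k)$. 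Together with the downward-closure of bisizes---if $(b,w)$ is a bisize then so is every couple it dominates, in particular $(\min(b,w),\min(b,w))$---this shows the maximum balanced biclique has $\max\min(b,w)$ vertices of each colour, again over all bisizes.

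Next I would note that each objective $f(x,y)\in\{\,x+y,\ x\cdot y,\ \min(x,y)\,\}$ is nondecreasing in both coordinates on $\mathbb{N}^2$: this is immediate for the sum and the minimum, and for the product it uses that bisizes are nonnegative. The crux is then a single domination argument. Since $G$ is finite the bisize set is finite, so every bisize is dominated by some maxbisize in $\D{G}$. Picking a bisize $(x',y')$ attaining $\max f$ over all bisizes and a maxbisize $(x,y)\in\D{G}$ that dominates it, monotonicity yields $f(x,y)\ge f(x',y')$, so the maximum is already attained on $\D{G}$. As $\D{G}$ is a subset of all bisizes the reverse inequality is trivial, giving equality for all three objectives.

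I do not anticipate a real obstacle; the only delicate point is the balanced case, where $\min(x,y)$ is not the vertex count of a single biclique but the size of a balanced sub-biclique extracted from one. That is exactly what downward-closure resolves, by guaranteeing the balanced bisize $(\min(x,y),\min(x,y))$ is itself realized. Everything else reduces to the monotonicity of the three objectives under domination.
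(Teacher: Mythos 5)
Your proof is correct, and it rests on the same underlying observation as the paper's: all three objectives are monotone under domination, and the bisize set is downward closed (which is exactly what rescues the balanced case). The two arguments differ in how the maximum is transferred to $\D{G}$. The paper argues that the vertex- and edge-optimal bicliques are \emph{maximal} bicliques and concludes from this that their sizes lie in $\D{G}$; taken literally that inference is loose, since by the paper's own Remark a maximal biclique need not realize a maxbisize (one must additionally invoke optimality of the biclique to rule out strict domination of its size). You instead maximize each objective over \emph{all} bisizes and then note that, the bisize set being finite, every bisize is dominated by some maxbisize, so monotonicity of $x+y$, $x\cdot y$ and $\min(x,y)$ pulls the optimum onto $\D{G}$. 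Your route is the more robust of the two: it avoids any case analysis on whether the optimum biclique is maximal, it handles degenerate cases uniformly (e.g.\ edgeless graphs, where the edge objective is $0$ and the optimal size can be strictly dominated), and one short domination argument covers all three objectives, with the balanced case needing only the downward-closure step that you make explicit.
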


\begin{proof}
  Both vertex-maximum and edge-maximum bicliques are maximal biclique so their size are in $\D G$. Counting vertices of a $(b,w)$ biclique is just adding $b$ and $w$, and counting edges is multiplying them, thus the two first assertions. For a maximum balanced biclique of size $(m,m)$ notice that, if that biclique is not maximal, then it is either included in a $(m,w>m)$ or in a $(b>m,m)$ maximal biclique. Taking, for each maxbisize $(b,w)$, $min(b,w)$ thus yields the largest balanced biclique in contains, and thus the largest of all of them is the maximum balanced biclique.
\end{proof}

\begin{proposition}
	\label{mod_flag}
	Let $(b,w)$ be a maxbisize of $G$, $C$ a biclique of size $(b,w)$, and $M$ be a vertex subset  fully adjacent to $C \setminus M$. $(|B_M \cap C|,|W_M \cap C|)$ is in $\D{M}$.
\end{proposition}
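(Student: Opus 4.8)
The plan is to argue by contradiction, exploiting the fact that the full adjacency of $M$ to $C \setminus M$ lets us ``swap out'' the portion of $C$ lying inside $M$ for any other biclique of $G[M]$ without destroying the biclique property. First I would observe that $(B_M \cap C) \cup (W_M \cap C)$, being the subset of the biclique $C$ that lives entirely inside $M$, is itself a biclique of $G[M]$; hence $(|B_M \cap C|,|W_M \cap C|)$ is certainly a bisize of $G[M]$. The only thing left to establish is that it is not strictly dominated by another bisize of $G[M]$, i.e.\ that it actually lies in $\D{M}$.

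So suppose, for contradiction, that some bisize of $G[M]$ strictly dominates $(|B_M \cap C|,|W_M \cap C|)$, witnessed by a biclique $C' \subseteq M$ with $|B \cap C'| \ge |B_M \cap C|$ and $|W \cap C'| \ge |W_M \cap C|$, at least one inequality being strict. I would then consider the set $C'' = (C \setminus M) \cup C'$. The heart of the argument is checking that $C''$ is a biclique of $G$: the vertices of $C \setminus M$ already induce a complete bipartite graph (they sit inside $C$), those of $C'$ do too by hypothesis, and the cross edges between the two parts are exactly what full adjacency provides --- every black vertex of $C' \subseteq M$ is adjacent to every white vertex of $C \setminus M$, and symmetrically every white vertex of $C'$ is adjacent to every black vertex of $C \setminus M$. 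Since $C'$ lies in $M$ while $C \setminus M$ lies outside $M$, the two parts are disjoint, so no vertex is counted twice.

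Counting then finishes the job. The number of black vertices of $C''$ is $(b - |B_M \cap C|) + |B \cap C'| \ge b$, and the number of white ones is $(w - |W_M \cap C|) + |W \cap C'| \ge w$, with at least one inequality strict because the assumed domination was strict. Thus $C''$ would be a biclique of $G$ whose bisize strictly dominates $(b,w)$, contradicting the hypothesis that $(b,w)$ is a maxbisize of $G$. I expect the only delicate step to be the biclique verification of $C''$: one must invoke the definition of \emph{fully adjacent} in both color directions and use the disjointness of $C'$ from $C \setminus M$ so that the size arithmetic is exact. Everything else is routine bookkeeping.
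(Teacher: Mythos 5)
Your proposal is correct and follows essentially the same route as the paper's own proof: both establish that $C \cap M$ gives a bisize of $G[M]$, then argue by contradiction that a strictly dominating biclique $C'$ inside $M$ could be unioned with $C \setminus M$ (using full adjacency for the cross edges) to produce a biclique of $G$ strictly dominating $(b,w)$. Your write-up is in fact slightly more careful than the paper's, which misstates the adjacency hypothesis as ``$C$ is fully adjacent to $C \setminus M$'' where it means $M$; you invoke the hypothesis correctly and verify all four adjacency cases explicitly.
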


\begin{proof}
	$C \cap M$ is a biclique so $(|B_M \cap C|,|W_M \cap C|)$ is a bisize from $G$ and from $G[M]$. Let us suppose it is not a maxbisize of $G[M]$. Then $G[M]$ contains a biclique $C'$ of size at least  $(|B_M \cap C|+1,|W_M \cap C|)$ or $(|B_M \cap C|,|W_M \cap C|+1)$. Since $C$ is fully adjacent to $C\setminus M$, $C' \cup (C \setminus M)$ is a biclique of $G$ of size at least $(b+1,w)$ or $(b,w+1)$ and therefore $(b,w)$ is not a maxbisize of $G$, a contradiction.
\end{proof}

\begin{proposition}
	\label{mod_flag_incl}
Let $(b,w)$ be a maxbisize of $G$, $C$ a biclique of size $(b,w)$, and $M$ be a vertex subset such that $C \subseteq M$. $(b,w)$ is in $\D{M}$.
\end{proposition}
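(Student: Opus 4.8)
The plan is to observe that this statement is considerably simpler than Proposition~\ref{mod_flag}: it requires only the monotonicity of bisizes under taking induced subgraphs, and no adjacency condition between $M$ and the rest of $V$ is needed.

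First I would note that since $C \subseteq M$ and $C$ is a biclique of size $(b,w)$, the whole of $C$ lives inside $G[M]$, so $(b,w)$ is immediately a bisize of $G[M]$. It then remains only to argue that this bisize is maximal among bisizes of $G[M]$, i.e. that it is an element of $\D{M}$.

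For maximality I would argue by contradiction. Suppose $(b,w)$ is strictly dominated inside $\D{M}$: there is a bisize $(b',w')$ of $G[M]$ with $b \le b'$, $w \le w'$ and $(b,w)\ne(b',w')$. The single key step is that any biclique of $G[M]$ is also a biclique of $G$, since $G[M]$ is an induced subgraph of $G$ ($M \subseteq V$); hence $(b',w')$ is also a bisize of $G$. This yields a bisize of $G$ strictly dominating $(b,w)$, contradicting the hypothesis that $(b,w)$ is a maxbisize of $G$.

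I expect no real obstacle here. Unlike Proposition~\ref{mod_flag}, there is no need to re-glue a larger biclique using full adjacency to the exterior of $M$, because the candidate dominating biclique already sits inside $M$, hence inside $G$, automatically. The entire content is the implication ``$(b',w')$ is a bisize of $G[M]$'' $\Rightarrow$ ``$(b',w')$ is a bisize of $G$'', which is immediate from the definition of an induced subgraph.
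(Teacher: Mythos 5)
Your proof is correct. For comparison: the paper actually states Proposition~\ref{mod_flag_incl} with no proof at all, the intended justification presumably being that it is the degenerate case of Proposition~\ref{mod_flag} in which $C \setminus M = \emptyset$, so the full-adjacency hypothesis holds vacuously and $(|B_M \cap C|, |W_M \cap C|) = (b,w)$. Your self-contained argument --- $(b,w)$ is a bisize of $G[M]$ because $C$ survives intact in the induced subgraph, and any bisize of $G[M]$ strictly dominating it would lift back to a bisize of $G$, contradicting that $(b,w)$ is a maxbisize of $G$ --- is exactly the specialization of the paper's proof of Proposition~\ref{mod_flag} to this setting, where the gluing step $C' \cup (C \setminus M)$ collapses to just $C'$. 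Both routes are sound; deriving the statement as a corollary of Proposition~\ref{mod_flag} avoids repetition, while your direct argument has the merit of making explicit that membership of $(b,w)$ in $\D{M}$ requires no adjacency hypothesis whatsoever, only that induced subgraphs preserve bicliques in both directions when the vertex set is contained in $M$.
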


\begin{proposition}\label{remincl}
  If, for some set $M$, $X_M$ is a set of bisizes of $G[M]$ containing all maxbisizes, then $Dom(X_M)=\D M$.
\end{proposition}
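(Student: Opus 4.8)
The plan is to prove the two inclusions $Dom(X_M)\subseteq\D M$ and $\D M\subseteq Dom(X_M)$ separately. The engine of both directions is the structural fact recorded in the Remark: the bisizes of $G[M]$ form a downward-closed subset of $\mathbb{N}^2$, and since they are bounded by $(|B_M|,|W_M|)$ this set is finite. Two consequences follow that I will use repeatedly: every bisize is dominated by at least one maxbisize, and $\D M$ is precisely the set of maximal elements of the collection of bisizes of $G[M]$.

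First I would handle $\D M\subseteq Dom(X_M)$. Given $(b,w)\in\D M$, the hypothesis $\D M\subseteq X_M$ already places $(b,w)$ in $X_M$; and since every element of $X_M$ is a bisize, a strict dominator of $(b,w)$ inside $X_M$ would be a bisize strictly dominating a maxbisize, which is impossible by definition of maxbisize. Hence $(b,w)$ is not strictly dominated in $X_M$ and survives into $Dom(X_M)$.

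Next I would treat $Dom(X_M)\subseteq\D M$. Take $(b,w)\in Dom(X_M)$; it is a bisize because $X_M$ consists of bisizes. If it failed to be a maxbisize, some bisize would strictly dominate it, and then — by finiteness and down-closure — some \emph{maxbisize} $(b'',w'')$ would strictly dominate it as well (chaining the strict relation through a maximal element keeps it strict). The crucial point is that $(b'',w'')$ lies in $X_M$, precisely because $X_M$ contains all maxbisizes by hypothesis; so $(b,w)$ would be strictly dominated by an element of $X_M$, contradicting $(b,w)\in Dom(X_M)$. Therefore $(b,w)$ is a maxbisize, i.e. $(b,w)\in\D M$.

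I expect this last step to be the only delicate one: the argument must \emph{upgrade} an arbitrary strict dominator to a maxbisize strict dominator, and it is exactly here that the two hypotheses on $X_M$ — that its members are bisizes, and that it contains \emph{all} maxbisizes — are jointly used, together with finiteness of the bisize set. Everything else is routine order-theoretic bookkeeping.
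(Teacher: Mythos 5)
Your proof is correct. Note that the paper itself gives no proof of this proposition at all -- it is stated bare, treated as routine order-theoretic bookkeeping -- so there is no authorial argument to compare against; yours is a complete and valid filling-in of the omitted reasoning. Both directions are sound: the containment $\D M\subseteq Dom(X_M)$ uses exactly the two hypotheses (maxbisizes are in $X_M$, and members of $X_M$ are bisizes), and the ``upgrade'' step in the reverse direction -- replacing an arbitrary strict dominator by a maximal one while preserving strictness -- is the one genuinely non-trivial observation, and you handle it correctly via finiteness of the bisize set. One small simplification: downward-closure of the bisize set (the paper's Remark) is not actually needed anywhere in your argument; finiteness alone guarantees that every bisize is dominated by a maximal one, and strictness propagates componentwise exactly as you say.
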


\begin{lemma}\label{lemDomn2}
  Let $(k_1,k_2)\in  \mathbb{N}^2$ and $X\subset   \mathbb{N}^2$ such that each element of $X$ is dominated by   $(k_1,k_2)$. $Dom(X)$ can be computed in  $O(|X| + \min(k_1,k_2))$ time.
\end{lemma}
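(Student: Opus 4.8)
The plan is to compute $Dom(X)$, i.e.\ the maximal couples under domination, by a bucketing argument on the coordinate with the smaller range, so as to replace the $O(|X|\log|X|)$ comparison sort of a generic skyline computation by a linear pass. Write $k=\min(k_1,k_2)$ and assume, transposing the two coordinates if $k_2<k_1$, that $k=k_1$; thus the first coordinate of every couple of $X$ lies in $\{0,\dots,k_1\}$.

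First I would collapse $X$ to at most $k_1+1$ \emph{candidate} couples. I allocate and initialise an array $A[0],\dots,A[k_1]$ in $O(k_1)$ time, then scan $X$ once (in $O(|X|)$ time) setting $A[x]$ to the largest second coordinate among couples of $X$ whose first coordinate is $x$. The candidate for an occurring value $x$ is $(x,A[x])$: every couple $(x,y)\in X$ with $y<A[x]$ is strictly dominated by $(x,A[x])$, hence is not in $Dom(X)$.

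The step I expect to need the most care is checking that this collapse is lossless, namely that $Dom(X)$ equals $Dom$ of the candidate set. The inclusion of $Dom(X)$ in the candidate set is immediate from the previous remark. For the converse, suppose a candidate $c$ is strictly dominated by some $p\in X$; if $p$ is itself a candidate we are done, and otherwise $p$ is strictly dominated by the candidate $c'$ sharing its first coordinate, so by transitivity of domination $c'$ dominates $c$, and $c'\ne c$ (were $c'=c$, then $c$ and $p$ would strictly dominate each other, which is impossible). Hence a candidate lies in $Dom(X)$ iff it is undominated \emph{among the candidates}.

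It remains to extract the maximal candidates, which all have pairwise distinct first coordinates. A candidate $(x,A[x])$ is strictly dominated by another candidate exactly when some candidate with a strictly larger first coordinate has second coordinate at least $A[x]$. So I would sweep $x$ from $k_1$ down to $0$, maintaining the running maximum $\mu$ of the second coordinates already encountered, and output $(x,A[x])$ precisely when $A[x]>\mu$ (updating $\mu$ afterwards). This sweep costs $O(k_1)$, giving a total of $O(|X|+k_1)=O(|X|+\min(k_1,k_2))$. Beyond the transitivity argument above, the only subtlety is the strict-versus-non-strict bookkeeping in the domination tests (for instance that a tie $A[x]=\mu$ must \emph{exclude} $(x,A[x])$, since the two couples then differ in their first coordinate); the rest is routine cost accounting.
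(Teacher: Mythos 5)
Your proposal is correct and follows essentially the same approach as the paper's proof: bucket the couples by the coordinate with the smaller range $\min(k_1,k_2)$, keep only the per-bucket maximum of the other coordinate, then do a right-to-left sweep with a running maximum to discard dominated candidates, for a total of $O(|X|+\min(k_1,k_2))$ time. Your write-up is in fact slightly more careful than the paper's (the explicit losslessness-of-collapse argument and the strict-inequality tie-breaking), the only point to nail down being that non-occurring first coordinates must never be output as candidates (e.g.\ initialise the array to $-1$ rather than $0$), an edge case the paper's zero-filled array glosses over as well.
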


\begin{proof}
Let us suppose wlog. that $k_1 \le k_2$. The algorithm simply creates a zero-filled array $a$ of length  $k_1$. Then for each  $(b,w) \in X$,  $a[b] := \max(a[b],w)$. Finally the pairs ($b, a[b])$ for which  $\forall b' > b$,  $ a[b'] < a[b]$ are output. A simple right-to-left swap of $a$ is enough to check that this condition holds for all entries.

Let us show a pair $(b,w)\in X$ is in $Dom(X)$ iff  $a[b] = w$ and $\forall b' > b$ we have $ a[b'] < w$. If $(b,w)\in Dom(X)$ the max assignments assert that  $a[b]\ge w$. Since $X$ does not contains $(b,w')$ with $w'>w$ we have $a[b]=w$.  If there would exist $b'>b$ such that $a[b'] \ge w$ then  $(b',a[b']) \in X$ dominates $(b,w)$, a contradiction. So we have the direct sense.
For the converse, let us suppose  $a[b]=w$ and $\forall b' > b$ we have $ a[b'] < w$. Then $(b,w)\in X$. If it were dominated by $(b,w')$ this would imply $a[b]=w'$. If it were dominated by $(b'>b,w')$ then $a[b']\ge w'\ge w$, a contradiction. So $(b,w)\in Dom(X)$. The computation clearly takes $O(k_1)$ time for allocating $a$ and backward scanning it in the second pass, plus $O(|X|)$ time for the first pass. Of course if ($k_1 > k_2)$ the same is performed on the black side.
\end{proof}

\section{Bicliques with respect to bimodular decomposition}
Let us now investigate how bicliques and bisizes behave with respect to the four recursive bimodular decomposition cases, and for the four nontrivial base cases (cycles and paths and their bipartite complements) of Theorem~\ref{thlozin}. For shortening the proofs, we consider here decomposition into only \emph{two} parallel, series or K+S parts (that may still be decomposable and thus not be components), while the canonical decomposition has arbitrary arity nodes, but they can be greedily split into binary nodes of the same type.

\subsection{Parallel case}

\begin{theorem}
	\label{parralel_thm}
	Let $G$ be a bipartite graph and $V=X\uplus Y$ where $X$ and $Y$ are nonempty and nonadjacent. Then $\D G = Dom( \D X \cup \D Y \cup \{(0,|W_G|),$ $(|B_G|,0)\} )$.
\end{theorem}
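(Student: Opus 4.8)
The plan is to apply Proposition~\ref{remincl} with $M = V$. To that end I will exhibit the set $S = \D X \cup \D Y \cup \{(0,|W_G|),(|B_G|,0)\}$ and prove two things: (i) every element of $S$ is a bisize of $G$, and (ii) $S$ contains every maxbisize of $G$. Once both hold, Proposition~\ref{remincl} immediately gives $Dom(S)=\D G$, which is exactly the claimed equality.

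For (i), each element of $\D X$ is a bisize of $G[X]$, and a biclique of $G[X]$ is a biclique of $G$, so $\D X$ consists of bisizes of $G$; symmetrically for $\D Y$. The pair $(|B_G|,0)$ is realized by the biclique consisting of all black vertices (vacuously a biclique, as no white vertex is present), and $(0,|W_G|)$ by the set of all white vertices. Hence every element of $S$ is a bisize of $G$.

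For (ii) I will split on whether a maxbisize $(b,w)$ of $G$ is trivial. The nontrivial case is the heart of the argument: if $b\ge 1$ and $w\ge 1$, fix a biclique $C$ of size $(b,w)$, which then contains both a black and a white vertex. Since $X$ and $Y$ are nonadjacent, there is no edge between any vertex of $X$ and any vertex of $Y$, so a black vertex in one part and a white vertex in the other are never adjacent; consequently $C$ cannot meet both $X$ and $Y$, as that would force a nonadjacent black-white pair inside $C$. Thus $C \subseteq X$ or $C \subseteq Y$; say $C\subseteq X$, so $(b,w)$ is a bisize of $G[X]$. If it were not a maxbisize of $G[X]$, a strictly dominating bisize of $G[X]$ would also be a bisize of $G$, contradicting the maximality of $(b,w)$ in $G$; hence $(b,w)\in\D X\subseteq S$. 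For the trivial case, if $w=0$ then $(b,0)$ being a maxbisize forces $b=|B_G|$ (otherwise $(|B_G|,0)$ strictly dominates it), so $(b,w)=(|B_G|,0)\in S$; symmetrically, $b=0$ yields $(0,|W_G|)\in S$.

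Having established (i) and (ii), Proposition~\ref{remincl} yields $Dom(S)=\D G$, the desired identity. I expect the only delicate point to be the confinement argument in the nontrivial case — making explicit why a two-colored biclique cannot straddle the two nonadjacent parts — whereas all trivial bisizes are captured precisely by the two extra pairs $(0,|W_G|)$ and $(|B_G|,0)$.
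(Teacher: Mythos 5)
Your proof is correct and follows essentially the same route as the paper: confine a nontrivial maxbisize's biclique to $X$ or $Y$ by nonadjacency, account for the trivial maxbisizes via the two added pairs, and conclude with Proposition~\ref{remincl}. The only difference is presentational — you re-derive Proposition~\ref{mod_flag_incl} inline and spell out the trivial case more explicitly than the paper does.
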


\begin{proof}
  For any nontrivial maxbisize $(b,w)$ of $G$, there exists a biclique $C$ of that size, included either in $X$ or in $Y$ (since they are nonadjacent). Let $M=X$ or $Y$ be the side $C$ is. Applying Proposition~\ref{mod_flag_incl}  we get that $(|B_M \cap C|,|W_M \cap C|) = (b,w)$ is a maxbisize of $M$, and thus, adding the two trivial maxbisizes, we have  $\D G \subseteq \D X \cup \D Y \cup \{(0,|W_G|),(|B_G|,0)\}$.
  Since a maxbisize of $X$ (resp. $Y$) is a bisize of $G$ by proposition~\ref{remincl} we have $\D G = Dom( \D X \cup \D Y \cup \{(0,|W_G|),(|B_G|,0)\} )$.
\end{proof}

\subsection{Series case}
\begin{theorem}
	\label{serie_thm}
	Let $G$ be a bipartite graph and $V=X\uplus Y$ where $X$ and $Y$ are nonempty and fully adjacent. Then $\D G = Dom(\D X \oplus \D Y)$
\end{theorem}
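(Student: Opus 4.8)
The plan is to apply Proposition~\ref{remincl} to the candidate family $X_V = \D X \oplus \D Y$, regarded as a set of bisizes of $G=G[V]$. For this I must verify two things: first, that every couple in $\D X \oplus \D Y$ is genuinely a bisize of $G$; and second, that $\D X \oplus \D Y$ contains every maxbisize of $G$. Once both hold, Proposition~\ref{remincl} delivers $\D G = Dom(\D X \oplus \D Y)$ at once. Since $X$ and $Y$ are nonempty, each of $G[X]$ and $G[Y]$ has at least one (trivial) maxbisize, so $\D X$ and $\D Y$ are nonempty and the degenerate convention $\emptyset \oplus X = X$ never comes into play.

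For the first point I would use the full adjacency between $X$ and $Y$ as the engine. Take $(b_X,w_X)\in \D X$ and $(b_Y,w_Y)\in \D Y$, realized by bicliques $C_X$ in $G[X]$ and $C_Y$ in $G[Y]$. Because every black vertex of $X$ is adjacent to every white vertex of $Y$ and symmetrically, every black--white pair straddling the two parts is an edge; hence $C_X \cup C_Y$ is a biclique of $G$ of size $(b_X+b_Y,\,w_X+w_Y)$, and so each element of $\D X \oplus \D Y$ is a bisize of $G$.

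For the second point, let $(b,w)$ be an arbitrary maxbisize of $G$, realized by a biclique $C$, and set $C_X = C\cap X$, $C_Y = C\cap Y$. I want $(|B_X \cap C|,|W_X \cap C|)\in \D X$ and $(|B_Y \cap C|,|W_Y \cap C|)\in \D Y$, since then their coordinatewise sum is exactly $(b,w)$ and lies in $\D X \oplus \D Y$. This is precisely where Proposition~\ref{mod_flag} does the work: applying it with $M=X$ is legitimate because $C\setminus X = C_Y \subseteq Y$ and $X$ is fully adjacent to $Y$, hence to $C\setminus X$; the case $M=Y$ is symmetric.

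The step I expect to be the crux is this second direction, and more precisely the need to certify that the two restrictions of $C$ are \emph{maxbisizes} of their parts rather than mere bisizes. A purely combinatorial split of $C$ only yields bisizes; the upgrade to maxbisizes is exactly the content of Proposition~\ref{mod_flag}, whose single nontrivial hypothesis (full adjacency of $M$ to $C\setminus M$) the series structure supplies for free on both sides. With both inclusions established, Proposition~\ref{remincl} closes the argument.
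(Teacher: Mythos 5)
Your proposal is correct and follows essentially the same route as the paper's own proof: Proposition~\ref{mod_flag} applied with $M=X$ and $M=Y$ to show every maxbisize of $G$ splits into maxbisizes of the two parts, the union of bicliques across the fully adjacent parts to show every element of $\D X \oplus \D Y$ is a bisize of $G$, and Proposition~\ref{remincl} to conclude. The only (harmless) differences are that you spell out why full adjacency of $X$ to $Y$ passes to $C\setminus X$, and you handle trivial maxbisizes uniformly where the paper restricts attention to nontrivial ones.
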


\begin{proof}
  For any nontrivial maxbisize $(b,w)$ of $G$, there exists a biclique $C$ of that size. Applying Property \ref{mod_flag} to $M=X$ we get that $(|B_X \cap C|,|W_X \cap C|)$ is a maxbisize of $X$, and to $M=Y$ that $(|B_Y \cap C|,|W_Y \cap C|)$ is a maxbisize of  $Y$. Since $(|B_X \cap C|,|W_X \cap C|) + (|B_Y \cap C|,|W_Y \cap C|) = (b,w)$, we have $\D G \subseteq \D X \oplus \D Y$.

  Taking the union of any $(b,w)$-sized biclique from $X$ and any $(b',w')$-sized biclique of $Y$   yields a  $(x+x', y+y')$-sized biclique of $G$, therefore $\D X \oplus \D Y$ contains only bisizes of $G$.

Applying Proposition\ref{remincl} yields  $\D G = Dom(\D X \oplus \D Y)$. 
\end{proof}

\subsection{K+S case}
\begin{theorem}
	\label{ks_thm}
	Let $G$ be a bipartite graph and $V=X\uplus Y$ where $X$ and $Y$ are nonempty and $X$ left adjacent to $Y$. Then $\D G = Dom( (\rightarrow_w^{|W_Y|} \D X) \cup (\rightarrow_b^{|B_X|} \D Y) )$.
\end{theorem}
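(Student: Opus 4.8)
The plan is to exploit the asymmetry of left-adjacency, which forbids exactly the edges between $W_X$ and $B_Y$. The starting observation is a dichotomy on any biclique $C$ of $G$: since no white vertex of $X$ is adjacent to any black vertex of $Y$, a biclique cannot simultaneously contain a vertex of $W_X$ and a vertex of $B_Y$. Hence every biclique $C$ satisfies either $B_C\subseteq X$ (no black vertex of $C$ lies in $Y$) or $W_C\subseteq Y$ (no white vertex of $C$ lies in $X$), possibly both.

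First I would treat the case $B_C\subseteq X$ for a maxbisize $(b,w)$ realized by $C$. Here $C\setminus X=C\cap Y$ contains only white vertices, and since every black vertex of $X$ is adjacent to every white vertex of $Y$, the set $X$ is fully adjacent to $C\setminus X$ (the white-to-black half of full adjacency being vacuous). Proposition~\ref{mod_flag} with $M=X$ then shows that $(|B_X\cap C|,|W_X\cap C|)=(b,|W_X\cap C|)$ lies in $\D{X}$. Moreover, because all of $B_C\subseteq B_X$ is adjacent to all of $W_Y$, maximality of $(b,w)$ forces $C$ to contain all of $W_Y$; otherwise an unused white vertex of $W_Y$ could be added, strictly increasing $w$. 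Consequently $w=|W_X\cap C|+|W_Y|$ and $(b,w)\in\ \rightarrow_w^{|W_Y|}\D{X}$. The case $W_C\subseteq Y$ is handled symmetrically, swapping colors and sides and invoking Proposition~\ref{mod_flag} with $M=Y$, to place $(b,w)$ in $\rightarrow_b^{|B_X|}\D{Y}$. Together these give $\D{G}\subseteq(\rightarrow_w^{|W_Y|}\D{X})\cup(\rightarrow_b^{|B_X|}\D{Y})$; the trivial maxbisizes $(0,|W_G|)$ and $(|B_G|,0)$ need no special treatment, arising automatically from $(0,|W_X|)\in\D{X}$ and $(|B_Y|,0)\in\D{Y}$.

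Next I would check the reverse inclusion at the level of bisizes. Any $(b',w')\in\D{X}$ yields, by adjoining the whole of $W_Y$ (each black vertex of the biclique lies in $B_X$ and is adjacent to all of $W_Y$), a genuine biclique of $G$ of size $(b',w'+|W_Y|)$; symmetrically every element of $\rightarrow_b^{|B_X|}\D{Y}$ is a bisize of $G$. Thus the union is a set of bisizes of $G$ containing every maxbisize, and Proposition~\ref{remincl} (applied with $M=V$) closes the argument, giving $\D{G}=Dom((\rightarrow_w^{|W_Y|}\D{X})\cup(\rightarrow_b^{|B_X|}\D{Y}))$.

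The main obstacle is establishing the dichotomy cleanly and then justifying the two full-adjacency verifications needed to invoke Proposition~\ref{mod_flag}: the key point is that in each branch one color class of $C$ on the opposite side is forced to be empty, which makes the otherwise-missing half of full adjacency vacuously true. The only other subtlety is the maximality argument showing that a maxbisize in the first branch must already contain all of $W_Y$ (and, symmetrically, all of $B_X$ in the second), which is precisely what makes the equality land exactly in the shifted sets rather than merely being dominated by them.
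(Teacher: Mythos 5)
Your proof is correct and follows essentially the same route as the paper's: the same dichotomy (no biclique can meet both $W_X$ and $B_Y$), the same invocation of Proposition~\ref{mod_flag} with $M=X$ or $M=Y$ after observing the relevant full adjacency is half-vacuous, the same maximality argument forcing $W_Y$ (resp.\ $B_X$) to be entirely contained in the biclique, and the same closing step via Proposition~\ref{remincl}. The only differences are cosmetic: you apply Proposition~\ref{mod_flag} before the maximality argument rather than after, and your aside that the trivial maxbisizes ``arise automatically'' is slightly loosely stated but harmless, since your case analysis already covers them uniformly.
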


\begin{proof}
	First let us prove  $\D G \subseteq (\rightarrow_w^{|W_Y|} \D X) \cup (\rightarrow_b^{|B_X|} \D Y)$. Let  $(b,w)$ be a maxbisize of $G$ and $C$ a biclique of that size. As there is no edge between $B_Y$ and $W_X$ then $C \cap W_X = \emptyset$ or $C \cap B_Y = \emptyset$.
	
	If we suppose $C \cap W_X = \emptyset$ then the size of $C\cap X$ is $(|B_X|,0)$ (otherwise  $B_X \cup (C \setminus X)$ would be a biclique strictly larger than $C$, impossible). 
	$C \setminus Y$ is fully adjacent to $Y$ so by 
	applying Property~\ref{mod_flag} with $M=Y$ we get that  $(|B_Y \cap C|,|W_Y \cap C|)$ is a maxbisize of $\D Y$.
	Then $(b,w) = (|B_Y \cap C|+|B_X|,|W_Y \cap C|)$ is in $\rightarrow_b^{|B_X|} \D Y$.
	
	If we suppose now $C \cap B_Y = \emptyset$ , then the size of $C \cap Y$ is $(0,|W_Y|)$ (otherwise,  $W_Y \cup (C \setminus Y)$ would be a biclique strictly larger than $C$, impossible). 
	$C \setminus X$ is fully adjacent to $X$ so by applying Property
	\ref{mod_flag} with $M=X$ we get that $(|B_X \cap C|,|W_X \cap C|)$ is a maxbisize of $\D X$. Then $(b,w) = (|B_X \cap C|,|W_X \cap C|+|W_Y|)$ is in $\rightarrow_w^{|W_Y|} \D X$. And finally $\D G \subseteq (\rightarrow_w^{|W_Y|} \D X) \cup (\rightarrow_b^{|B_X|} \D Y)$.
	
	Then notice that  $(b,w)\in \rightarrow_b^{|B_X|} \D Y$ is a bisize of $G$, since 
	the union of a biclique of size $(b-|B_x|,w)$ in $Y$ (which exists since  $(b-|B_x|,w)$ is a maxbisize of $Y$) and of $B_X$ is a biclique of $G$. The same is true for   $\rightarrow_w^{|W_Y|} \D X$ and finally applying Proposition\ref{remincl} $\D G = Dom( (\rightarrow_w^{|W_Y|} \D X) \cup (\rightarrow_b^{|B_X|} \D Y))$.
\end{proof}

\subsection{Prime case}
Now let us see how, in the Prime case, knowing the maxbisize set of the maximum canonical bimodules allows to compute the maxbisize set of a graph.

\begin{definition}[quotient graph and its maximal biclique set]
  Let $G$ be a twin-free graph having a Prime decomposition (i.e. no Parallel, Series nor K+S decomposition and at least 4 vertices), and $M_1...M_k$ is maximum canonical bimodules. $M_i$  may either consist in a non-trivial (i.e. of at least 4 vertices) canonical bimodule, or in a trivial one-vertex bimodule $\{v_i\}$ (when $v_i$ is either augmenting or does not belong to any non-trivial bimodule). The \emph{quotient graph}  $H_G=(V_{GB}\uplus V_{GW}, E_G)$ is defined as follow.
  There is a vertex $b_i\in V_{GB}$ iff $  M_i \cap B \neq \emptyset$.
  There is a vertex $w_i\in V_{GW}$ iff $  M_i \cap W \neq \emptyset$.
  There is an edge in $E_G$  between $b_i$ and $w_j$ iff there exists an edge between some vertex of $M_i\cap B$ and some vertex of $M_j \cap W$ (notice that, dealing with bimodules, using ``some'' or ``all'' in this definition is equivalent when $i \neq j$).

   $\mathcal{C_{H_G}}$ denotes the set of \emph{all} maximal bicliques of $H_G$. 
\end{definition}

If a graph has $k_1$ nontrivial maximal canonical bimodules and $k_2$ vertices not in any canonical bimodule then $H_G$ has $2k_1 + k_2$ vertices.
Let us now present two functions allowing to go between vertex-subsets of $G$ and of  $H_G$:

\begin{definition}[Corr(S)]
  For a vertex subset $S$ of $G$, let  $Corr(S)$ be a vertex subset of $H_G$ such that:

\begin{itemize}
	\item $b_i \in Corr(S)$ iff $S \cap M_i \cap B \neq \emptyset$
	\item $w_i \in Corr(S)$ iff $S \cap M_i \cap W \neq \emptyset$
\end{itemize}
\end{definition}

\begin{definition}[Rroc(S)]
  For  a vertex subset $S$ of $H_G$, let $Rroc(S)$ be the vertex subset of $G$ such that,  for any $x\in V$, $x\in Rroc(S)$ when
  
  \begin{itemize}
  \item there exists  $i\in [1,k]$ such that $x\in B_{M_i}$ and $b_i\in S$, or
  \item there exists  $i\in [1,k]$ such that $x\in W_{M_i}$ and $w_i\in S$
  \end{itemize}
\end{definition}


\begin{proposition}
	\label{bicliqueH}
If $C$ is a biclique of $G$ then $Corr(C)$ is a biclique of $H_G$. For each  biclique $C$ of $G$ there exist a maximal biclique $C'$ of $H_G$ such that $C \subseteq Rroc(C')$.
\end{proposition}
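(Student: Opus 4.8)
The plan is to prove the two assertions in sequence, using the first to obtain the second. For the first claim I will verify directly from the definitions that $Corr(C)$ satisfies the biclique condition in $H_G$. For the second, I will show that $C$ is always recovered by $Rroc$ applied to its own image, that is $C \subseteq Rroc(Corr(C))$, and then extend $Corr(C)$ to a maximal biclique while preserving this containment by monotonicity of $Rroc$.

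First I would establish that $Corr(C)$ is a biclique of $H_G$. Let $b_i$ and $w_j$ be two arbitrary vertices of $Corr(C)$. By definition of $Corr$ there is a black vertex $x \in C \cap M_i \cap B$ witnessing $b_i \in Corr(C)$ and a white vertex $y \in C \cap M_j \cap W$ witnessing $w_j \in Corr(C)$. Since $C$ is a biclique of $G$ and $x,y \in C$ have opposite colors, the edge $xy$ belongs to $E$. As $x \in M_i \cap B$ and $y \in M_j \cap W$, this edge witnesses precisely the ``some vertex'' condition that places an edge between $b_i$ and $w_j$ in $E_G$. Hence any black and any white vertex of $Corr(C)$ are adjacent in $H_G$, so $Corr(C)$ is a biclique.

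Next I would prove $C \subseteq Rroc(Corr(C))$. Take any $x \in C$; since $M_1,\ldots,M_k$ partition $V$, there is a unique $i$ with $x \in M_i$. If $x$ is black then $x \in C \cap M_i \cap B$, so $b_i \in Corr(C)$, and since $x \in B_{M_i}$ the definition of $Rroc$ places $x$ in $Rroc(Corr(C))$; the white case is symmetric. This gives $C \subseteq Rroc(Corr(C))$. I would then note that $Rroc$ is monotone: if $S \subseteq S'$, every membership witness ($b_i$ or $w_i$) for a vertex of $Rroc(S)$ also lies in $S'$, so $Rroc(S) \subseteq Rroc(S')$. Extending the biclique $Corr(C)$ to any maximal biclique $C'$ of $H_G$ (possible since every biclique is contained in a maximal one) yields $Corr(C) \subseteq C'$, whence $C \subseteq Rroc(Corr(C)) \subseteq Rroc(C')$, as required.

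The only point demanding care is the diagonal case $i = j$ in the first claim, where the edge $b_i w_i$ must arise from an edge internal to a single bimodule $M_i$ rather than from the ``some $=$ all'' equivalence that is available only for distinct bimodules. This is exactly why I use the explicit witnesses $x,y \in C$: their adjacency is guaranteed by $C$ being a biclique, so the internal edge of $M_i$ is produced directly and this case presents no difficulty. Everything else — the partition argument, monotonicity, and extension to a maximal biclique — is routine.
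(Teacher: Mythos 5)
Your proposal is correct and follows essentially the same route as the paper's proof: witnesses from the definition of $Corr$ give the edge in $H_G$ for the first claim, and extending $Corr(C)$ to a maximal biclique $C'$ gives the containment $C \subseteq Rroc(C')$ for the second. Your explicit factoring through $C \subseteq Rroc(Corr(C))$ and the monotonicity of $Rroc$, as well as your remark on the diagonal case $i=j$, merely make explicit steps the paper leaves implicit.
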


\begin{proof}
	Let $b_i,w_j \in Corr(C)$ for some $i,j \in [1,k]$. By definition there exist $b \in B_C \cap M_i$ and $w \in W_C \cap M_j$, since $C$ is a biclique $\{b,w\} \in E$ then there is an edge between $b_i$ and $w_j$.
	
	$Corr(C)$ is a biclique in $H_G$ therefore there exists a maximum biclique $C'$ of $H_G$ such that $Corr(C) \subseteq C'$. 
	let $b \in B_C$ and $w \in W_C$, so there exists $i,j \in [1,k]$ such that $b \in M_i$ and $w \in M_j$, so $b_i,w_j \in Corr(C)$, therefore $b_i,w_j \in C'$ and $b,w \in Rroc(C')$.
\end{proof}

\begin{definition}[Maxbisize set with respect to a maximal biclique]
Let $C$ be a maximal biclique of $H_G$ and $M$ a bimodule of $G$. Let $\DC{M}$ be:
\begin{itemize}
  \item if $C$ does not intersect $M$, $\emptyset$,
  \item otherwise, if $C\cap M \subset B$, then $(|M \cap B|,0)$,
  \item otherwise, if $C\cap M \subset W$, then $(0,|M \cap W|)$,
  \item otherwise, $\D{G[M]}$.
\end{itemize}
\end{definition}

\begin{theorem}
  \label{prem_thm}
  Let $k$ be a given constant integer. For any graph $G$ with a Prime decomposition, such that $G$ has at most $k$ maximal nontrivial canonical bimodules $M_1,...M_k$
  
  \begin{enumerate}
    \item for $C\in  \mathcal{C}_{H_G}$, $\D{Rroc(C)}=Dom(\DC{M_1}\oplus Dom(\DC{M_2} \oplus...\DC{M_k})...))$
    \item  $\D G = Dom( \bigcup_{C \in \mathcal{C}_{H_G}} \D {Rroc(C)})$ 
  \end{enumerate}
\end{theorem}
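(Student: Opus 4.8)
The plan is to prove the two parts in sequence, since part (2) relies on the characterization of maximal bicliques of $G$ in terms of maximal bicliques of $H_G$, and part (1) computes the maxbisize set within a single such biclique via the series-decomposition machinery. Throughout, the key structural fact I would lean on is that distinct maximal canonical bimodules $M_i, M_j$ (with $i\ne j$) are, by the quotient definition, either fully adjacent or nonadjacent, the adjacency being recorded exactly by the edges of $H_G$.

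For part (1), fix $C \in \mathcal{C}_{H_G}$ and consider the induced subgraph $G[Rroc(C)]$. The idea is that $Rroc(C)$ decomposes as a disjoint union of its intersections with the bimodules, $Rroc(C)\cap M_i$, and that because $C$ is a biclique in the quotient, any two of these nonempty pieces that contribute differently-colored vertices are fully adjacent inside $G$. So $G[Rroc(C)]$ behaves like an iterated Series composition of the pieces $G[Rroc(C)\cap M_i]$. I would argue that $\DC{M_i}$ is precisely the maxbisize set of the piece $Rroc(C)\cap M_i$: the three cases in the definition handle, respectively, $C$ missing $M_i$ (empty piece), $C$ meeting $M_i$ only on the black side (so $Rroc(C)\cap M_i = M_i\cap B$, a one-sided set contributing $(|M_i\cap B|,0)$), the symmetric white case, and the generic case where $C$ meets both colors of $M_i$, forcing $Rroc(C)\cap M_i = M_i$ and hence maxbisize set $\D{G[M_i]}$. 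Since full adjacency between differently-colored pieces lets me apply Theorem~\ref{serie_thm} repeatedly, the maxbisize set of $G[Rroc(C)]$ is $Dom$ of the iterated $\oplus$ of the $\DC{M_i}$, which is exactly the claimed nested expression. I would also need to confirm that pieces contributing only black (or only white) vertices still combine correctly, which the $\oplus$ operator and the one-sided cases handle automatically.

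For part (2), the plan is a double inclusion after taking $Dom$. For the ``$\subseteq$'' direction, let $(b,w)$ be a maxbisize of $G$ realized by a biclique $C_0$; by Proposition~\ref{bicliqueH} there is a maximal biclique $C'$ of $H_G$ with $C_0 \subseteq Rroc(C')$, so $(b,w)$ is a bisize of $G[Rroc(C')]$, hence dominated by an element of $\D{Rroc(C')}$, which sits in the right-hand union. For the ``$\supseteq$'' direction, every maxbisize appearing in some $\D{Rroc(C)}$ is realized by a biclique inside $Rroc(C)$, which is a biclique of $G$, so it is a bisize of $G$; then Proposition~\ref{remincl} (the union contains all genuine maxbisizes and only bisizes) closes the equality after applying $Dom$.

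\textbf{The main obstacle} I anticipate is rigorously justifying the Series-decomposition step in part (1): I must verify that for a maximal biclique $C$ of $H_G$, whenever $b_i, w_j \in C$ with $i\ne j$ the corresponding pieces are genuinely fully adjacent in $G$ (not merely ``some'' edge exists), which requires invoking the bimodule property rather than just the quotient-edge definition, and that the one-color pieces slot cleanly into the iterated $\oplus$ without breaking the inductive application of Theorem~\ref{serie_thm}. A secondary subtlety is ensuring that $Rroc(C)\cap M_i$ equals $M_i$ exactly when $C$ meets both color-classes of $M_i$, since $Rroc$ pulls back an entire color-class of a bimodule from a single quotient vertex; this is what makes the generic case legitimately yield $\D{G[M_i]}$ rather than a proper subgraph's maxbisize set.
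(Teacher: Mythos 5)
Your proposal is correct and takes essentially the same route as the paper's proof: part (1) by viewing $G[Rroc(C)]$ as an iterated Series composition of the pieces $Rroc(C)\cap M_i$, identifying $\DC{M_i}$ with the maxbisize set of each piece, and applying Theorem~\ref{serie_thm}; part (2) via Proposition~\ref{bicliqueH} plus Proposition~\ref{remincl}. The subtleties you flag (full adjacency of the pieces via the bimodule property, and the exact form of $Rroc(C)\cap M_i$) are precisely what the paper disposes of parenthetically in its quotient-graph definition and in its four-case analysis of $\D{G[M_i]}$.
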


\begin{proof}
  Let  $C\in  \mathcal{C}_{H_G}$ be any maximal biclique of $H_G$. $G[Rroc(C)]$  falls in the Series case of Theorem~\ref{thlozin} or of Theorem~\ref{thfm}: each maximum canonical bimodule $M_i$ is either fully adjacent to the others or absent (not in $G[Rroc(C)]$, when $b_i\notin C$ and $w_i\notin C$). Then we just have to apply Theorem~\ref{serie_thm}. Just take care on what  $\D G[M_i]$ is :
  
\begin{itemize}
\item if  $b_i\notin C$ and $w_i\notin C$, then $M_i$ is absent from  $ G[Rroc(C)]$ and $\D G[M_i]$ is the empty set.
\item if $b_i\in C$ and ($w_i \notin C$ or $w_i$ does not exists) then  $M_i \cap B$ has only one maxbisize : $(|M \cap B|,0)$.
\item if $w_i \in C$ and ($b_i \notin C$ or $b_i$ does not exists) then $M_i \cap W$ has only one maxbisize : $(0,|M \cap W|)$.
\item otherwise (if both  $b_i \in C$ and $w_i\in C)$ then we assume we know $\D G[M_i]$. 
\end{itemize}

$\DC{M_i}$ is just defined so that the first assertion is true: $\D{Rroc(C)}=Dom(\DC{M_1}\oplus Dom(\DC{M_2} \oplus...\DC{M_k})...))$.

For the second assertion, let  $(b,w)$ be a maxbisize of $G$ and $D$ a biclique of that size.
By Proposition~\ref{bicliqueH} there must exists a maximal biclique $C$ in $H_G$ such that $D \subseteq Rroc(C)$.
Then $(b,w) \in \D{C}$ by Proposition~\ref{mod_flag_incl} and finally  $\D G \subseteq \bigcup_{C \in \mathcal{C}_{H_G}} \D {Rroc(C)}$.

An element of $\bigcup_{C \in \mathcal{C}_{H_G}} \D {Rroc(C)}$ is a maxbisize of a subgraph of $G$, so is a bisize of $G$, and finally by Proposition~\ref{remincl} $\D G = Dom( \bigcup_{C \in C_{H_G}} \D {Rroc(C)})$.
\end{proof}

\begin{figure}
	\vspace{-.5cm}
	\begin{center}
		\resizebox{180pt}{!}{
			\begin{tikzpicture}[scale=1, transform shape]
				
				\newcommand{\convexpath}[2]{
					[   
					create hullnodes/.code={
						\global\edef\namelist{#1}
						\foreach [count=\counter] \nodename in \namelist {
							\global\edef\numberofnodes{\counter}
							\node at (\nodename) [draw=none,name=hullnode\counter] {};
						}
						\node at (hullnode\numberofnodes) [name=hullnode0,draw=none] {};
						\pgfmathtruncatemacro\lastnumber{\numberofnodes+1}
						\node at (hullnode1) [name=hullnode\lastnumber,draw=none] {};
					},
					create hullnodes
					]
					($(hullnode1)!#2!-90:(hullnode0)$)
					\foreach [
					evaluate=\currentnode as \previousnode using \currentnode-1,
					evaluate=\currentnode as \nextnode using \currentnode+1
					] \currentnode in {1,...,\numberofnodes} {
						let
						\p1 = ($(hullnode\currentnode)!#2!-90:(hullnode\previousnode)$),
						\p2 = ($(hullnode\currentnode)!#2!90:(hullnode\nextnode)$),
						\p3 = ($(\p1) - (hullnode\currentnode)$),
						\n1 = {atan2(\y3,\x3)},
						\p4 = ($(\p2) - (hullnode\currentnode)$),
						\n2 = {atan2(\y4,\x4)},
						\n{delta} = {-Mod(\n1-\n2,360)}
						in 
						{-- (\p1) arc[start angle=\n1, delta angle=\n{delta}, radius=#2] -- (\p2)}
					}
					-- cycle
				}
				
				\coordinate (P11) at (0,0);
				\coordinate (P10) at (2,0);
				\coordinate (P9) at (4,0);
				\coordinate (P8) at (6,0);
				
				\coordinate (J) at (1,1);
				\coordinate (I) at (3,1);
				\coordinate (H) at (5,1);
				
				\coordinate (P5) at (4,3);
				\coordinate (E) at (6,2);
				
				\coordinate (P6) at (2,3);
				\coordinate (F) at (0,2);
				\coordinate (P7) at (-1,3);
				\coordinate (G) at (-2,2);
				
				\draw[very thick] (G) -- (P7);
				\draw[very thick] (P7) -- (F);
				\draw[very thick] (P6) -- (F);
				
				\draw[very thick] (P6) -- (J);
				\draw[very thick] (P6) -- (I);
				\draw[very thick] (P6) -- (H);
				
				\draw[very thick] (P5) -- (J);
				\draw[very thick] (P5) -- (I);
				\draw[very thick] (P5) -- (H);
				
				\draw[very thick] (P5) -- (E);

				\draw[very thick] (P11) -- (J);
				
				\draw[very thick] (P10) -- (I);
				\draw[very thick] (P10) -- (J);
				
				\draw[very thick] (P9) -- (I);
				
				\draw[very thick] (P8) -- (I);
				\draw[very thick] (P8) -- (H);

				\filldraw [black] (J) circle (5pt);
				\filldraw [black] (I) circle (5pt);
				\filldraw [black] (H) circle (5pt);
				\filldraw [black] (E) circle (5pt);
				\filldraw [black] (F) circle (5pt);
				\filldraw [black] (G) circle (5pt);
				
				\filldraw [color=black, fill=white, very thick] (P5) circle (5pt);
				\filldraw [color=black, fill=white, very thick] (P6) circle (5pt);
				\filldraw [color=black, fill=white, very thick] (P7) circle (5pt);
				\filldraw [color=black, fill=white, very thick] (P8) circle (5pt);
				\filldraw [color=black, fill=white, very thick] (P9) circle (5pt);
				\filldraw [color=black, fill=white, very thick] (P10) circle (5pt);
				\filldraw [color=black, fill=white, very thick] (P11) circle (5pt);

				\fill[blue, opacity=0.3] \convexpath{P11,J,H,P8}{8pt};
				\fill[green, opacity=0.3] \convexpath{G,F,P6,F}{8pt};

\coordinate (aP11) at (10,0);
\coordinate (aP10) at (12,0);
\coordinate (aP9) at (14,0);
\coordinate (aP8) at (16,0);

\coordinate (aJ) at (11,1);
\coordinate (aI) at (13,1);
\coordinate (aH) at (15,1);

\coordinate (aP5) at (14,3);
\coordinate (aE) at (16,2);

\coordinate (aP6) at (12,3);
\coordinate (aF) at (10,2);
\coordinate (aP7) at (8,3);

\draw[very thick] (E) -- (aP7);
\draw[very thick] (aP7) -- (aF);
\draw[very thick] (aP6) -- (aF);

\draw[very thick] (aP6) -- (aJ);
\draw[line width=4pt] (aP6) -- (aI);
\draw[very thick] (aP6) -- (aH);

\draw[very thick] (aP5) -- (aJ);
\draw[line width=4pt] (aP5) -- (aI);
\draw[very thick] (aP5) -- (aH);

\draw[very thick] (aP5) -- (aE);

\draw[very thick] (aP11) -- (aJ);

\draw[line width=4pt] (aP10) -- (aI);
\draw[very thick] (aP10) -- (aJ);

\draw[line width=4pt] (aP9) -- (aI);

\draw[line width=3pt] (aP8) -- (aI);
\draw[very thick] (aP8) -- (aH);

\filldraw [black] (aJ) circle (5pt);
\filldraw [black] (aI) circle (5pt);
\filldraw [black] (aH) circle (5pt);
\filldraw [black] (aE) circle (5pt);
\filldraw [black] (aF) circle (5pt);

\filldraw [color=black, fill=white, very thick] (aP5) circle (5pt);
\filldraw [color=black, fill=white, very thick] (aP6) circle (5pt);
\filldraw [color=black, fill=white, very thick] (aP7) circle (5pt);
\filldraw [color=black, fill=white, very thick] (aP8) circle (5pt);
\filldraw [color=black, fill=white, very thick] (aP9) circle (5pt);
\filldraw [color=black, fill=white, very thick] (aP10) circle (5pt);
\filldraw [color=black, fill=white, very thick] (aP11) circle (5pt);

\fill[red,opacity=0.3] \convexpath{aP6,aP5,aP8,aP11}{8pt};
\fill[yellow, opacity=0.3] \convexpath{aP11,aJ,aH,aP8}{8pt};

\end{tikzpicture}}
		
	\end{center}\vspace{-.5cm}
	\begin{center}
	\resizebox{150pt}{!}{
		\begin{tikzpicture}[scale=1, transform shape]
			
			\newcommand{\convexpath}[2]{
				[   
				create hullnodes/.code={
					\global\edef\namelist{#1}
					\foreach [count=\counter] \nodename in \namelist {
						\global\edef\numberofnodes{\counter}
						\node at (\nodename) [draw=none,name=hullnode\counter] {};
					}
					\node at (hullnode\numberofnodes) [name=hullnode0,draw=none] {};
					\pgfmathtruncatemacro\lastnumber{\numberofnodes+1}
					\node at (hullnode1) [name=hullnode\lastnumber,draw=none] {};
				},
				create hullnodes
				]
				($(hullnode1)!#2!-90:(hullnode0)$)
				\foreach [
				evaluate=\currentnode as \previousnode using \currentnode-1,
				evaluate=\currentnode as \nextnode using \currentnode+1
				] \currentnode in {1,...,\numberofnodes} {
					let
					\p1 = ($(hullnode\currentnode)!#2!-90:(hullnode\previousnode)$),
					\p2 = ($(hullnode\currentnode)!#2!90:(hullnode\nextnode)$),
					\p3 = ($(\p1) - (hullnode\currentnode)$),
					\n1 = {atan2(\y3,\x3)},
					\p4 = ($(\p2) - (hullnode\currentnode)$),
					\n2 = {atan2(\y4,\x4)},
					\n{delta} = {-Mod(\n1-\n2,360)}
					in 
					{-- (\p1) arc[start angle=\n1, delta angle=\n{delta}, radius=#2] -- (\p2)}
				}
				-- cycle
			}
			
			\coordinate (P11P8) at (3,0);
			\coordinate (HIJ) at (3,1);
			
			\coordinate (P5) at (4,1);
			\coordinate (E) at (4,0);
			
			\coordinate (P6) at (2,1);
			\coordinate (F) at (2,0);
			\coordinate (P7) at (1,0);
			
			\draw[very thick] (P7) -- (F);
			\draw[very thick] (P6) -- (F);
			
			\draw[very thick] (P6) -- (HIJ);
			\draw[very thick] (P5) -- (HIJ);
			
			\draw[very thick] (P5) -- (E);

			\draw[very thick] (P11P8) -- (HIJ);
			
			\filldraw [black] (HIJ) circle (5pt);
			\filldraw [black] (E) circle (5pt);
			\filldraw [black] (F) circle (5pt);
			
			\filldraw [color=black, fill=white, very thick] (P5) circle (5pt);
			\filldraw [color=black, fill=white, very thick] (P6) circle (5pt);
			\filldraw [color=black, fill=white, very thick] (P7) circle (5pt);
			\filldraw [color=black, fill=white, very thick] (P11P8) circle (5pt);

			\fill[blue, opacity=0.3] \convexpath{P11P8,HIJ}{8pt};
			\fill[green, opacity=0.3] \convexpath{P6,F}{8pt};

				\coordinate (aP11P8) at (7,0);
				\coordinate (aHIJ) at (7,1);
				
				\coordinate (aP5) at (8,1);
				\coordinate (aE) at (8,0);
				
				\coordinate (aP6) at (6,1);
				\coordinate (aF) at (6,0);
				\coordinate (aP7) at (5,0);
				
				\draw[very thick] (aP7) -- (aF);
				\draw[very thick] (aP6) -- (aF);
				
				\draw[very thick] (aP6) -- (aHIJ);
				\draw[very thick] (aP5) -- (aHIJ);
				\draw[very thick] (aP7) -- (E);
				\draw[very thick] (aP5) -- (aE);

				\draw[very thick] (aP11P8) -- (aHIJ);
				
				\filldraw [black] (aHIJ) circle (5pt);
				\filldraw [black] (aE) circle (5pt);
				\filldraw [black] (aF) circle (5pt);
				
				\filldraw [color=black, fill=white, very thick] (aP5) circle (5pt);
				\filldraw [color=black, fill=white, very thick] (aP6) circle (5pt);
				\filldraw [color=black, fill=white, very thick] (aP7) circle (5pt);
				\filldraw [color=black, fill=white, very thick] (aP11P8) circle (5pt);

				\fill[red,opacity=0.3] \convexpath{aP6,aP5,aP11P8}{8pt};
				\fill[yellow, opacity=0.3] \convexpath{aP11P8,aHIJ}{8pt};
				
		\end{tikzpicture}}
	\end{center}\vspace{-.5cm}
	\caption{An example of $G$  with non-trivial maximal bimodules highlighted in green, yellow and blue; $H_G$ drawn below with the edges corresponding to non-trivial maximal bimodules colored the same color. A biclique of $G$ whose size is a maxbisize is drawn with thicker edges; the corresponding biclique $C$ in $H_G$ $C$ (bottom), and $Rroc(C)$ (top), are highlighted in red.}\vspace{-.5cm}
\end{figure}

\subsection{Base cases:  paths and cycles and their bipartite complements}
\begin{remark}\label{rempathcy}
  Let $G$ be cycle or a path of $b>2$ and $w>2$ vertices. $\D G=\{(1,2),(2,1),(0,w),(b,0)\}$ 
\end{remark}

\begin{theorem} \label{path2_thm}
Let $G$ be the bipartite complement of a path of $b>2$ and $w>2$ vertices. $\D G=\{(b,0), (0,w), (b',w')\ | \ (b'+w')= \lfloor \frac{n}{2} \rfloor\,\ b'\le b,\ w'\le w \}$ 
\end{theorem}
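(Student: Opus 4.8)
The plan is to exploit the defining property of the bipartite complement recalled in the introduction: a vertex set $B'\uplus W'$ with $B'\subseteq B$, $W'\subseteq W$ induces a biclique in $\bc{P}$ precisely when it induces no edge of the path $P$ between $B'$ and $W'$, i.e.\ when $B'\cup W'$ is an independent set of $P$. Hence $(b',w')$ is a bisize of $G=\bc{P}$ iff $P$ has an independent set with $b'$ black and $w'$ white vertices, and $\D G$ is exactly the set of non-dominated colour-signatures of independent sets of $P$. I fix coordinates $p_1,\dots,p_n$ along the path with the colours alternating; up to exchanging the two colours I assume $p_1$ is black, so the black vertices sit at the odd positions and the white ones at the even positions, and $|b-w|\le 1$.

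First I handle the two trivial signatures. Each colour class is an independent set, so $(b,0)$ and $(0,w)$ are bisizes. Moreover the all-black set is a \emph{maximal} independent set, since every white vertex $p_j$ has a black neighbour $p_{j\pm1}$ already chosen; thus no white vertex can be added while keeping all $b$ black ones, so $(b,0)$ is dominated by no achievable pair, and symmetrically for $(0,w)$. These are therefore maxbisizes. Next I bound the mixed signatures. Any independent set of $P_n$ has at most $\lceil n/2\rceil$ vertices, which already settles the even case since then $\lceil n/2\rceil=\lfloor n/2\rfloor$. For odd $n$ I argue that the unique maximum independent set is monochromatic: a set of size $(n+1)/2$ leaves only $(n-1)/2$ unchosen vertices, which is exactly the number of forced (size $\ge 1$) gaps between consecutive chosen vertices, so every internal gap equals $1$ and both end gaps vanish, forcing the set to be $\{p_1,p_3,\dots,p_n\}$, i.e.\ all-black. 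Consequently any independent set using both colours has at most $\lfloor n/2\rfloor$ vertices, so every mixed maxbisize has $b'+w'\le\lfloor n/2\rfloor$.

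It remains to realise every target pair on the line $b'+w'=\lfloor n/2\rfloor$ with $b'\le b$ and $w'\le w$, and then to assemble. For the realisation I give an explicit two-block independent set: the first $b'$ black vertices at positions $1,3,\dots,2b'-1$, followed by the whites at positions $2b'+2,2b'+4,\dots,2b'+2w'$. These positions are pairwise non-adjacent (the only delicate junction is between $p_{2b'-1}$ and $p_{2b'+2}$, which differ by $3$), the colours are correct by the parity of the positions, and the largest position used is $2(b'+w')=2\lfloor n/2\rfloor\le n$, so the set fits inside the path and has exactly $b'$ black and $w'$ white vertices; the hypotheses $b'\le b$ and $w'\le w$ are exactly what guarantees enough odd and even positions. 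I then assemble with Proposition~\ref{remincl}: every achievable pair is dominated by a listed one — a pair with $w'=0$ (resp.\ $b'=0$) by $(b,0)$ (resp.\ $(0,w)$), and any mixed pair of sum $<\lfloor n/2\rfloor$ by the line pair obtained by raising its coordinates inside the box $[b',b]\times[w',w]$ until the sum reaches $\lfloor n/2\rfloor$ — while each listed mixed pair is non-dominated because any strict dominator would have strictly larger sum, contradicting the bound.

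I expect the achievability construction of the previous paragraph to be the genuine work; the one point requiring care is a bookkeeping issue at the endpoints of the line. When $n$ is even, the zero-coordinate endpoints $(\lfloor n/2\rfloor,0)$ and $(0,\lfloor n/2\rfloor)$ coincide with the trivial maxbisizes $(b,0)$ and $(0,w)$, so the stated set is exactly $\D G$. When $n$ is odd, say $b=w+1$, the endpoint $(\lfloor n/2\rfloor,0)=(w,0)$ is strictly dominated by the trivial maxbisize $(b,0)=(w+1,0)$; the stated set should then be read up to $Dom$, so that the conclusion is $\D G = Dom\big(\{(b,0),(0,w)\}\cup\{(b',w')\mid b'+w'=\lfloor n/2\rfloor,\ b'\le b,\ w'\le w\}\big)$, which is precisely what Proposition~\ref{remincl} delivers from the covering and bound established above.
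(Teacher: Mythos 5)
Your proof is correct and follows essentially the same route as the paper's: the same coordinatization along the path (bicliques of $\bc{G}$ are exactly the independent sets of the path), the same pigeonhole bound $\lfloor n/2\rfloor$ for sets using both colors, and the same ``skip two'' construction realizing every pair on the line $b'+w'=\lfloor n/2\rfloor$. You are in fact more careful than the paper on two points it glosses over: the uniqueness (hence monochromaticity) of a maximum independent set when $n$ is odd, and the observation that for odd $n$ the stated set must be read up to $Dom$, since its endpoint $(\lfloor n/2\rfloor,0)$ is strictly dominated by $(b,0)$.
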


\begin{proof}
  Let us number from 1 to $n$ the vertices of $G$ along the path that is $\bc{G}$. Two vertices are adjacent iff their difference is odd and at least three. Given two consecutive vertices, at most one can belong to a given biclique. If we take all vertices from the same color we get the $(b,0)$ or $(0,w)$ bicliques. Otherwise, if a set contains both colors, a  set of  $\lfloor \frac{n}{2} \rfloor + 1$ or more vertices contains a pair of consecutive vertices and is not a biclique.
  Let us now show how to construct a $(b',w')$-sized biclique when 
$ (b'+w')= \lfloor \frac{n}{2} \rfloor$. If 1 is black, just take the $b'$ first black vertices of $G$, skip two vertices and take the remaining $w'$ white vertices. If $1$ is white, then take first $w'$ whites, skip two,  then $b'$ blacks. 
\end{proof}

\begin{theorem}
  \label{cycle_thm}
  Let $G$ be the bipartite complement of a cycle with $n$ vertices and $w \in W$ $b \in B$ such that $\{b,w\} \notin E$.
Then $\D{G} = Dom(\D{G - b} \cup \D{G - w})$.
\end{theorem}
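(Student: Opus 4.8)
The plan is to mirror the Parallel case (Theorem~\ref{parralel_thm}), the only new ingredient being the reason a biclique must avoid $b$ or $w$. The crucial observation is that, since $b$ and $w$ have different colors and $\{b,w\}\notin E$, no biclique of $G$ can contain both of them: a biclique requires every pair consisting of a black and a white vertex inside it to be an edge. Hence any biclique $C$ of $G$ satisfies $C\subseteq V-b$ or $C\subseteq V-w$.

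First I would establish $\D G \subseteq \D{G-b}\cup\D{G-w}$. Take any maxbisize $(x,y)$ of $G$ and a biclique $C$ realizing it. By the observation, $C\subseteq V-b$ or $C\subseteq V-w$; say the former. Applying Proposition~\ref{mod_flag_incl} with $M=V-b$ (so that $G[M]=G-b$) gives $(x,y)\in\D{G-b}$, and symmetrically in the other case. Note that invoking Proposition~\ref{mod_flag_incl} here, rather than merely observing that $(x,y)$ is a \emph{bisize} of the subgraph, is exactly the step that guarantees $(x,y)$ is genuinely a \emph{maxbisize} of $G-b$.

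Next, every element of $\D{G-b}\cup\D{G-w}$ is a maxbisize of an induced subgraph of $G$, hence a bisize of $G$. Therefore $\D{G-b}\cup\D{G-w}$ is a set of bisizes of $G$ that contains all maxbisizes of $G$, and Proposition~\ref{remincl} (applied with $M=V$) yields $\D G = Dom(\D{G-b}\cup\D{G-w})$, which is the claim of Theorem~\ref{cycle_thm}.

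I do not anticipate a genuine obstacle: the argument is short precisely because it never exploits the cycle structure directly. That structure matters only for the surrounding algorithm, where $b$ and $w$ are chosen as a non-adjacent pair (always available, as non-adjacency in $\bc{G}$ corresponds to an edge of the cycle), so that $G-b$ and $G-w$ become bipartite complements of paths whose maxbisize sets are supplied by Theorem~\ref{path2_thm}. If anything deserves care, it is keeping the two symmetric cases ($C\subseteq V-b$ versus $C\subseteq V-w$) cleanly separated and confirming that vertex deletion and bipartite complementation commute, so that $G-b$ and $G-w$ are indeed the path-complement base cases fed to the recursion.
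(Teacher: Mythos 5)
Your proof is correct and follows essentially the same route as the paper's: the observation that no biclique can contain both the non-adjacent pair $b,w$, then Proposition~\ref{mod_flag_incl} for the inclusion $\D{G}\subseteq \D{G-b}\cup\D{G-w}$, and Proposition~\ref{remincl} to conclude equality under $Dom$. Your added remarks (that the cycle structure is only exploited by the surrounding algorithm, where $G-b$ and $G-w$ become path complements) are accurate and simply make explicit what the paper leaves implicit.
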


\begin{proof}
	$\{b,w\} \notin E$ so any biclique is in $G - w$ or $G - b$, by Proposition~\ref{mod_flag_incl} $\D{G} \subseteq \D{G - b} \cup \D{G - w}$. 
	Any bisize of $G - w$ or $G - b$ is also a bisize of $G$ so by applying Proposition \ref{remincl} $\D{G} = Dom(\D{G - b} \cup \D{G - w})$.
\end{proof}

\section{Algorithms}
We shall see how to compute the maxbisize set for our two graphs classes. Then  Theorem~\ref{th3cl} says how the three biclique problems we address may be solved. But first let us state a complexity lemma.

\begin{lemma}
	\label{compl}
	Let $T$ be a tree with $n$ leave and no unary internal node. For a node $x$, let $|x|$ be the number of leaves in the subtree rooted at $x$. If, for each internal node $x$ with children  $x_1,x_2,...,x_k$ and for all $0 \leq i < j \leq k$ we perform an operation in  $O(|x_i| \times |x_j|)$ time, then the overall complexity is $O(n^2)$
\end{lemma}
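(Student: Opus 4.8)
The plan is to charge the work to pairs of leaves. Up to the hidden constant in the $O$-notation, the total running time is proportional to the purely combinatorial quantity
\[ S \;=\; \sum_{x}\ \sum_{i<j} |x_i|\cdot|x_j|, \]
where $x$ ranges over the internal nodes of $T$ and, for each such $x$ with children $x_1,\dots,x_k$, the inner sum ranges over the unordered pairs of children. I would therefore reduce the lemma to proving $S=O(n^2)$.

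A natural first attempt is to bound each node locally. Since the leaf sets of the children $x_1,\dots,x_k$ partition the $|x|$ leaves below $x$, we have $\sum_i |x_i| = |x|$ and hence $\sum_{i<j}|x_i|\cdot|x_j| \le \tfrac12|x|^2$. However, summing $|x|^2$ over all internal nodes is too lossy: for a caterpillar tree whose successive internal nodes have leaf-counts $n,n-1,\dots$ this sum is $\Theta(n^3)$. So a purely local estimate cannot give the claimed bound, and this is the main obstacle to circumvent.

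The key step is a global double-counting argument. I would read the term $|x_i|\cdot|x_j|$ as counting the unordered pairs $\{\ell,\ell'\}$ of leaves with $\ell$ in the subtree of $x_i$ and $\ell'$ in the subtree of $x_j$. For such a pair, $x$ is exactly its lowest common ancestor: both $\ell$ and $\ell'$ lie below $x$, but they lie in distinct child subtrees, so no child of $x$ is a common ancestor. Conversely, any two distinct leaves have a unique lowest common ancestor $x$ (an internal node), at which they fall into two distinct child subtrees, i.e.\ they are counted by exactly one term of $S$. This establishes a bijection between the pairs counted by $S$ and the set of all unordered pairs of distinct leaves, whence $S=\binom{n}{2}=O(n^2)$. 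Note that unary internal nodes would contribute $0$ to $S$ and are harmless for this count; the hypothesis that $T$ has no unary internal node is only needed to keep the number of internal nodes $O(n)$, so that any per-node bookkeeping outside these products does not dominate.
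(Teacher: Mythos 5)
Your proof is correct and follows essentially the same route as the paper's: reading each product $|x_i|\cdot|x_j|$ as a count of leaf pairs whose lowest common ancestor is $x$, and observing that this charges every pair of distinct leaves to exactly one internal node, giving a total of $\binom{n}{2}=O(n^2)$. Your additional remarks (the caterpillar example showing a purely local bound fails, and the role of the no-unary-node hypothesis in bounding per-node overhead) are sound refinements that the paper leaves implicit.
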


\begin{proof}
  The number of leaves under a children $x_i$ is $|x_i|$. Performing an   $O(|x_i| \times |x_j|)$-time operation  for all $0 \leq i < j \leq k$ on each internal node $x$ with children  $x_1,x_2,...,x_k$ has the same complexity than performing an $O(1)$ time operation on each couple of leaves whose last common ancestor is $x$. Since there are exactly $n^2$ 
  couples of leaves, and each of them  may be affected to a unique last common ancestor, we get the announced complexity.
\end{proof}

\subsection{$Star_{1,2,3}$-free twin-free graphs }

\begin{theorem}
	Let $G$ be a twin-free  $Star_{1,2,3}$-free  graph.  $\D G$ can be computed in $O(n^2)$-time. Furthermore for a given maxbisize, a biclique of that size may be exhibited in $O(n^2)$ time.
\end{theorem}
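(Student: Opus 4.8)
The plan is to compute $\D G$ by a bottom-up dynamic programming over the decomposition tree provided by Theorem~\ref{thlozin}, whose internal nodes are Parallel, Series or K+S nodes and whose leaves are the base cases: single vertices, paths, cycles, and bipartite complements of paths and cycles. At each node I store the maxbisize set of the induced subgraph, assembled from the maxbisize sets of its children, and I report $\D G$ at the root.

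First I would treat the leaves. A single black (resp. white) vertex has maxbisize set $\{(1,0)\}$ (resp. $\{(0,1)\}$). For a path or a cycle the maxbisize set is read directly from Remark~\ref{rempathcy}; for the bipartite complement of a path it is given by the explicit formula of Theorem~\ref{path2_thm}; and for the bipartite complement of a cycle I apply Theorem~\ref{cycle_thm}, which reduces the computation to two complement-of-path instances, since deleting a non-adjacent black, resp. white, vertex turns a complemented cycle into a complemented path. A leaf on $\ell$ vertices is thus handled in $O(\ell)$ time and yields a set of $O(\ell)$ maxbisizes.

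For an internal node I greedily split its arbitrary-arity decomposition into binary ones of the same type and combine the children left to right: Theorem~\ref{parralel_thm} in the Parallel case, Theorem~\ref{serie_thm} in the Series case, Theorem~\ref{ks_thm} in the K+S case. Each combination amounts to forming $\oplus$, the shifts $\rightarrow_w$ and $\rightarrow_b$, and unions, followed by a domination filter computed with Lemma~\ref{lemDomn2}. The quantitative heart of the argument is that every maxbisize set is an antichain for domination with coordinates bounded by $(|B_M|,|W_M|)$: sorting its elements by increasing first coordinate forces the second to strictly decrease, hence $|\D M| \le \min(|B_M|,|W_M|)+1 = O(|M|)$. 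Combining two accumulated parts of sizes $|x_i|$ and $|x_j|$, the Series $\oplus$ produces $O(|x_i|\times|x_j|)$ candidates whose coordinates are $O(|x_i|+|x_j|)$, so Lemma~\ref{lemDomn2} returns the filtered set in $O(|x_i|\times|x_j| + |x_i| + |x_j|) = O(|x_i|\times|x_j|)$ time; the Parallel and K+S combinations only use unions and shifts and are even cheaper, being linear in the total children size. Folding a node's children therefore costs $\sum_{i<j}O(|x_i|\times|x_j|)$, exactly the per-node budget of Lemma~\ref{compl}, which gives the global $O(n^2)$ bound.

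Finally, to exhibit a biclique of a prescribed maxbisize, I would record during the forward pass, for every maxbisize kept at a node, the pair (or, in the Parallel and K+S cases, the single) child maxbisize that produced it. Tracing these pointers top-down propagates a target maxbisize into each relevant child down to the leaves, where an explicit biclique is built by the constructions underlying Remark~\ref{rempathcy}, Theorem~\ref{path2_thm} and Theorem~\ref{cycle_thm}; the recombination rules of Theorems~\ref{serie_thm} and~\ref{ks_thm} then prescribe taking the union of the sub-bicliques (adjoining the full side $B_X$ or $W_Y$ in the K+S case), all within $O(n^2)$ time. The \emph{main obstacle} is the complexity analysis rather than correctness: one must establish the antichain size bound $|\D M|=O(|M|)$ and verify that the potential blow-up of the Series $\oplus$ is absorbed within the $O(|x_i|\times|x_j|)$ pairwise budget of Lemma~\ref{compl}, and that the cost of the $Dom$ filtering of Lemma~\ref{lemDomn2} is correctly charged to the leaf pairs.
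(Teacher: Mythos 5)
Your proposal is correct and follows essentially the same route as the paper: bottom-up dynamic programming over the binarized Lozin decomposition tree, using Theorems~\ref{parralel_thm}, \ref{serie_thm}, \ref{ks_thm} for internal nodes, Remark~\ref{rempathcy} and Theorems~\ref{path2_thm}, \ref{cycle_thm} for the base cases, Lemma~\ref{lemDomn2} for the domination filter, Lemma~\ref{compl} for the global $O(n^2)$ bound, and backward pointer tracing to exhibit a biclique. Your explicit antichain bound $|\D M| = O(|M|)$ and the careful charging of the $\oplus$ blow-up to leaf pairs make the complexity argument more detailed than the paper's, which leaves these points implicit in the phrase ``kept as sorted lists,'' but this is a refinement of the same argument, not a different one.
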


\begin{proof}
We use Lozin Theorem (Theorem\ref{thlozin}, cf. \cite{lozin2002bipartite}). A decomposition tree may be computed in $O(n+m)$-time~\cite{quaddoura2002linear}. Then we adapt the tree so that is is binary by splitting arbitrarily each node (for instance a Series nodes with $k$ sons yields any subtree with $k-1$ internal binary Series nodes). And the maxbisize set is computed bottom-up along the tree. Let $N$ be a node, $M$ the set of leaves under it, and  $X$ and $Y$ its children if $N$ is not a leaf.
	
	\begin{itemize}
		\item if $N$ is a path or a cycle then, according to Remark~\ref{rempathcy}, $\D M = \{(2,1),(1,2),$ $(|B_M|,0),(0,|W_M|) \}$.
		\item  if $N$ is the bipartite complement of a path then, according to Theorem~\ref{path2_thm},  $\D M = \{(|B_M|,0), (0,|W_M|), (b',w')\ | \ (b'+w')= \lfloor \frac{|M|}{2} \rfloor\,\ b'\le |B_M|,\ w'\le |W_M| \}$
		
\item  if $N$ is the bipartite complement of a cycle with $w \in W$ $b \in B$ such that $wb \notin E$ then according to Theorem \ref{cycle_thm} $\D{N} = Dom(\D{N - b} \cup \D{N - w})$
		
		\item if $N$ is Parallel then, according to Theorem \ref{parralel_thm} $\D M = Dom( \D X \cup \D Y \cup \{(0,|W_M|),(|B_M|,0)\} )$.
		
	        \item if $N$ is Series then, according to Theorem \ref{serie_thm}   $\D M = Dom(\D X \oplus \D Y)$.
		
		\item if $N$ is K+S then, according to Theorem \ref{ks_thm} $\D M = Dom( (\rightarrow_w^{|W_Y|} \D X) \cup (\rightarrow_b^{|B_X|} \D Y) )$.
	\end{itemize}
        We notice that the operators used ($Dom$, $\cup$, $\oplus$, $\rightarrow_w$ and $\rightarrow_b$) may all be computed in $O(n^2)$ if maxbisize sets are kept as sorted lists. For Dom it is given by Lemma~\ref{lemDomn2}, for the other ones is is straight from definition. Then Lemma~\ref{compl} says the overall complexity is   $O(n^2)$.

        To exhibit a biclique of a given size, we need to retrieve (or to have memorized) for each maxbisize, the (at most two) maxbisizes used for adding it in the set. A backward computation yields  trivial maxbisizes as base cases, allowing to compute the biclique by taking any vertices from the nonzero color in the vertex set corresponding to each trivial maxbisize.
\end{proof}

\begin{corollary}
Let $G$ be a twin-free  $Star_{1,2,3}$-free  graph. A vertex-maximum biclique, and an edge-maximum biclique, and a maximum balanced biclique, may be computed in $O(n^2)$ time.
\end{corollary}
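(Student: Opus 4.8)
The plan is to derive the corollary directly from the preceding theorem and from Theorem~\ref{th3cl}, so the real content has already been established and what remains is packaging. First I would invoke the preceding theorem to compute the maxbisize set $\D G$ in $O(n^2)$ time, keeping it as a sorted list of pairs together with the back-pointers that the theorem's reconstruction procedure records (for each maxbisize, the at most two maxbisizes used to produce it). Since the maxbisizes of $G$ are pairwise non-dominated and every coordinate lies in $[0,n]$, the list $\D G$ contains at most $n+1$ pairs.

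Next, for each of the three objectives I would scan this list once. By Theorem~\ref{th3cl}, the vertex-maximum biclique attains $\max_{(x,y)\in\D G}(x+y)$, the edge-maximum biclique attains $\max_{(x,y)\in\D G}(x\cdot y)$, and the maximum balanced biclique attains $\max_{(x,y)\in\D G}\min(x,y)$ vertices of each color. Each scan costs $O(|\D G|)=O(n)$ time and identifies a maxbisize $(b^\star,w^\star)$ realizing the optimum for the corresponding objective.

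Finally I would exhibit a witnessing biclique. For the vertex- and edge-maximum problems the chosen $(b^\star,w^\star)$ is itself a maxbisize, so the $O(n^2)$ reconstruction procedure of the preceding theorem produces a biclique of exactly that size. For the balanced problem the optimum value is $m=\min(b^\star,w^\star)$ for the selected maxbisize $(b^\star,w^\star)$; as already observed in the proof of Theorem~\ref{th3cl}, a balanced biclique of size $(m,m)$ is obtained by reconstructing the $(b^\star,w^\star)$-biclique and then discarding $b^\star-m$ black or $w^\star-m$ white vertices, an $O(n)$ trimming step. Summing the three contributions, the dominant cost is the single $O(n^2)$ computation of $\D G$ together with the $O(n^2)$ reconstructions, so the whole procedure runs in $O(n^2)$ time.

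I do not expect a genuine obstacle here, since the corollary merely combines the preceding theorem with Theorem~\ref{th3cl}. The one mild subtlety worth flagging is that, unlike the vertex- and edge-maximum cases, the balanced optimum need not itself be a maxbisize; one must therefore reconstruct a maximal biclique dominating it and then trim, rather than attempt to reconstruct the balanced size directly.
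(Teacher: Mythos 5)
Your proposal is correct and follows exactly the route the paper intends: the corollary is stated without proof as an immediate combination of the preceding theorem (computing $\D G$ and reconstructing a biclique of a given maxbisize in $O(n^2)$) with Theorem~\ref{th3cl} (linear scans of $\D G$ for the three objectives). Your added observation that the balanced optimum $(m,m)$ need not itself be a maxbisize, so one reconstructs the dominating maxbisize's biclique and trims, is a worthwhile clarification but matches what the paper's proof of Theorem~\ref{th3cl} already establishes implicitly.
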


\subsection{Bounded Bimodularwidth twin-free graphs}

\begin{theorem}
	Let $k$ be a constant and $G$ be a twin-free bipartite graph of bimodularwith at most $k$ and  $T_G$ its canonical bimodular decomposition tree.  $\D G$ can be computed in $O(n^2)$-time. Furthermore for a given maxbisize, a biclique of that size may be exhibited in $O(n^2)$ time.
  \end{theorem}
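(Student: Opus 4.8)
The plan is to run the same bottom-up dynamic program as in the previous theorem, but guided by Theorem~\ref{thfm} rather than Theorem~\ref{thlozin}: the recursion now terminates at single vertices, whose maxbisize set is the trivial $\{(1,0)\}$ or $\{(0,1)\}$, and the genuinely new ingredient is the Prime node, treated through Theorem~\ref{prem_thm}. First I would transform $T_G$ into a tree on which Lemma~\ref{compl} applies: every Series, Parallel and K+S node is split into a left-deep chain of binary nodes of the same type (exactly as before), while each Prime node is kept intact, so that it retains its arity, which is bounded by the bimodularwidth $k$. Then I traverse the tree from the leaves up, computing $\D M$ at each node $N$ with leaf set $M$. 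For leaves and for the binary Series, Parallel and K+S nodes I apply Theorems~\ref{serie_thm}, \ref{parralel_thm} and \ref{ks_thm}; keeping each maxbisize set as a list sorted by one coordinate, a Series node costs $O(|X|\cdot|Y|)$ (the set $\D X\oplus\D Y$ has that size, after which one $Dom$ via Lemma~\ref{lemDomn2} suffices) while Parallel and K+S nodes cost only $O(|X|+|Y|)$, all within the per-pair budget $O(|x_i|\cdot|x_j|)$ of Lemma~\ref{compl}.

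The core is the Prime node, with maximal canonical bimodules $M_1,\dots,M_k$, where $k$ is bounded by the hypothesis. I would first build the quotient graph $H_G$: it has at most $2k$ vertices and each adjacency $b_iw_j$ is read from a single representative pair (legitimate because for $i\ne j$ ``some'' and ``all'' coincide on bimodules), so $H_G$ is obtained in constant time per Prime node. Because $|V(H_G)|\le 2k$ is constant, the family $\mathcal{C}_{H_G}$ of all its maximal bicliques is of constant size and can be listed in $O(1)$ time; this is precisely where the bound on the bimodularwidth is essential. For each $C\in\mathcal{C}_{H_G}$ I compute $\D{Rroc(C)}$ by the first item of Theorem~\ref{prem_thm}, as the nested series composition $Dom(\DC{M_1}\oplus Dom(\DC{M_2}\oplus\cdots))$, each $\DC{M_i}$ being $\emptyset$, a single trivial maxbisize, or the already computed $\D{M_i}$ depending on how $C$ meets $M_i$; finally, by the second item, $\D M=Dom(\bigcup_{C\in\mathcal{C}_{H_G}}\D{Rroc(C)})$.

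The delicate step, and the one I expect to be the main obstacle, is to fit this Prime computation into the budget $O(\sum_{i<j}|M_i|\cdot|M_j|)$ required by Lemma~\ref{compl}. I would show that one nested series composition telescopes: writing $S_1=\DC{M_1}$ and $S_{i+1}=Dom(S_i\oplus\DC{M_{i+1}})$, each $S_i$ is a $Dom$-reduced set of bisizes over $M_1\cup\cdots\cup M_i$, hence has distinct black coordinates in $[0,\sum_{j\le i}|B_{M_j}|]$ and size $O(\sum_{j\le i}|M_j|)$; step $i$ then costs $O((\sum_{j\le i}|M_j|)\cdot|M_{i+1}|)$, and the sum over $i$ is exactly $O(\sum_{i<j}|M_i|\cdot|M_j|)$. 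Since only $O(1)$ bicliques $C$ are processed, the whole chain stays within that bound. The concluding union of $O(1)$ lists followed by one $Dom$ costs $O(|M|)$ at each node; these linear terms sum to $O(n^2)$ over the tree (each leaf is charged to its at most $n$ ancestors). With every node respecting either the pairwise $O(|x_i|\cdot|x_j|)$ cost or this separately-bounded linear cost, Lemma~\ref{compl} yields the announced $O(n^2)$ total.

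Finally, to exhibit a biclique of a prescribed maxbisize, I would, as in the previous theorem, memorize at the moment each maxbisize is inserted the one or two source maxbisizes that produced it, and additionally, at a Prime node, the maximal biclique $C$ of $H_G$ that was used. Tracing these pointers backward down to the trivial base maxbisizes at the single-vertex leaves, and reading off for each such leaf the corresponding vertex, reconstructs an actual biclique of $G$ of the required size; this backward pass touches each recorded step once and costs $O(n^2)$.
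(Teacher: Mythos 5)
Your proposal is correct and follows essentially the same route as the paper's proof: split Series/Parallel/K+S nodes into binary subtrees while keeping Prime nodes intact, handle each node type with Theorems~\ref{parralel_thm}, \ref{serie_thm}, \ref{ks_thm} and \ref{prem_thm}, bound the work via Lemma~\ref{compl}, and reconstruct a biclique by backward tracing. In fact your telescoping argument for charging the nested $Dom(\DC{M_1}\oplus Dom(\DC{M_2}\oplus\cdots))$ computation against the pairwise budget $O(\sum_{i<j}|M_i|\cdot|M_j|)$ is spelled out more carefully than in the paper, which merely asserts that bound.
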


\begin{proof}
We use the canonical decomposition theorem (Theorem\ref{thfm}, cf. \cite{WG04}). Like in the previous theorem, we adapt the tree so that each Series, Parallel or K+S node is split into a binary subtree, but we keep the Prime nodes with their at most $k$ children. Then  the maxbisize set is computed bottom-up along the tree. Let $N$ be a node, $M$ the set of leaves under it, and  $X$ and $Y$ its children if $N$ is Series, Parallel of K+S.
	
	\begin{itemize}
		\item if $N$ is a black (resp. a white) leaf then $\D M = \{(1,0)\}$  (resp. $ \{(0,1)\}$.
		\item if $N$ is Parallel then, according to Theorem \ref{parralel_thm},\\ $\D M = Dom( \D X \cup \D Y \cup \{(0,|W_M|),(|B_M|,0)\} )$
		
		\item 	if $N$ is Series then, according to Theorem \ref{serie_thm},\\ $\D M = Dom(\D X \oplus \D Y)$
		
		\item 	 $N$ is K+S then, according to Theorem \ref{ks_thm},\\ $\D M = Dom( (\rightarrow_w^{|W_Y|} \D X) \cup (\rightarrow_b^{|B_X|} \D Y) )$
		\item finally, if $N$ is Prime then, according to Theorem~\ref{prem_thm}, \\$\D M = Dom( \bigcup_{C \in \mathcal{C}_{H_M}} \D {Rroc(C)})$
	\end{itemize}
        The Series, Parallel and K+S case are proven like in the previous theorem.        Let us check the case $N$ is prime. Since $G$ has bounded bimodularwidth, it contains at most $k$ maximal canonical bimodules $M_1,...M_k'$ with $k'\le k$. The quotient $H_N$  has at most $2k$ nodes, and to list \emph{all} maximum bicliques of $G$ takes  $O(2^{2k})=O(1)$ time, and this list has $O(1)$ size with respect to $n$.  For a given maximal biclique $C$ of  $H_N$, Theorem~\ref{prem_thm} gives that  $\D{Rroc(C)}=Dom(\DC{M_1}\oplus Dom(\DC{M_2} \oplus...\DC{M_k'})...))$. That computation may be done in   $O(\sum_{x=0}^{i-1} |M_x| \times \sum_{y=x+1}^{i} |M_y|)$ time, and therefore Lemma~\ref{compl} applies: applied over all internal (Series, Parallel, K+S and Prime) nodes it yields the overall complexity is $O(n^2)$. 
	
        To exhibit a biclique of a given size, we may perform a backward computation like in the previous theorem. 
\end{proof}

\begin{corollary}
	Let $G$ be a bipartite graph of bimodularwith at most $k$ and  $T_G$ its canonical bimodular decomposition tree.  A vertex-maximum biclique, and an edge-maximum biclique, and a maximum balanced biclique, may be computed in $O(n^2)$ time.
\end{corollary}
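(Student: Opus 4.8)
The plan is to read this off as an immediate consequence of the preceding theorem together with Theorem~\ref{th3cl}. First I would invoke the preceding theorem to obtain the maxbisize set $\D G$ in $O(n^2)$ time from the canonical bimodular decomposition tree $T_G$. All that then remains is to turn $\D G$ into the three requested optima, which is exactly what Theorem~\ref{th3cl} supplies: a vertex-maximum biclique realises $\max_{(x,y)\in \D G}(x+y)$, an edge-maximum one realises $\max_{(x,y)\in \D G}(x\cdot y)$, and a maximum balanced one realises $\max_{(x,y)\in \D G}\min(x,y)$.

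Next I would bound the cost of this final extraction. Since $\D G=Dom(\dots)$ is an antichain for domination whose coordinates are bounded by $(|B|,|W|)$, two distinct maxbisizes cannot share a first coordinate (the larger second coordinate would dominate the smaller), so $|\D G|\le |B|+1=O(n)$. Hence a single linear scan of $\D G$, evaluating the $O(1)$ objective $x+y$, $x\cdot y$ or $\min(x,y)$ at each element and keeping the three running maxima, computes all three optimal values in $O(n)$ time. Added to the $O(n^2)$ cost of computing $\D G$, the total stays $O(n^2)$; and if an actual witnessing biclique is wanted rather than just its size, the backward computation described in the proof of the preceding theorem exhibits one in $O(n^2)$ time.

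The only point needing care is that the statement drops the twin-free hypothesis used in the theorem, so I would first \emph{twin-reduce} $G$. A maximal class of mutually twin vertices is pairwise nonadjacent (twins share a neighbourhood that cannot contain either of them) and forms a bimodule, so collapsing each such class to a single weighted leaf produces the twin-free graph underlying $T_G$, whose bimodularwidth is still at most $k$. The base case of the bottom-up computation is then adjusted so that a leaf standing for $t$ black (resp. white) twins has maxbisize set $\{(t,0)\}$ (resp. $\{(0,t)\}$); every recursive formula of Theorems~\ref{parralel_thm}, \ref{serie_thm}, \ref{ks_thm} and \ref{prem_thm} depends only on the maxbisize sets and the vertex counts of its parts, both of which remain correct under this weighting, so $\D G$ is computed exactly as before. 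I expect this twin-reduction bookkeeping to be the only genuine obstacle; the remainder is a direct specialisation of results already established in the excerpt.
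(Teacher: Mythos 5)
Your proposal is correct and follows the same route the paper intends: the paper states this corollary without any separate proof, treating it as an immediate consequence of the preceding theorem (which computes $\D G$ and exhibits a witnessing biclique in $O(n^2)$ time) combined with Theorem~\ref{th3cl}, which is exactly your argument. Your twin-reduction discussion and your $|\D G|=O(n)$ antichain bound are harmless extra care beyond what the paper records --- the missing ``twin-free'' in the statement is implicit anyway, since bimodularwidth and the canonical decomposition tree are only defined for twin-free graphs.
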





\bibliographystyle{splncs04}
\bibliography{iwoca}

\begin{thebibliography}{10}
\providecommand{\url}[1]{\texttt{#1}}
\providecommand{\urlprefix}{URL }
\providecommand{\doi}[1]{https://doi.org/#1}

\bibitem{Beutel2013CopyCatchSG}
Beutel, A., Xu, W., Guruswami, V., Palow, C., Faloutsos, C.: Copycatch:
  stopping group attacks by spotting lockstep behavior in social networks.
  Proceedings of the 22nd international conference on World Wide Web  (2013),
  \url{https://api.semanticscholar.org/CorpusID:1537397}

\bibitem{DAWANDE2001388}
Dawande, M., Keskinocak, P., Swaminathan, J.M., Tayur, S.: On bipartite and
  multipartite clique problems. Journal of Algorithms  \textbf{41}(2),
  388--403 (2001). \doi{https://doi.org/10.1006/jagm.2001.1199}

\bibitem{Feige02}
Feige, U.: Relations between average case complexity and approximation
  complexity. In: Proceedings of the Thiry-Fourth Annual ACM Symposium on
  Theory of Computing. p. 534–543. STOC '02, Association for Computing
  Machinery, New York, NY, USA (2002). \doi{10.1145/509907.509985}

\bibitem{WG04}
Fouquet, J., Habib, M., de~Montgolfier, F., Vanherpe, J.: Bimodular
  decomposition of bipartite graphs. In: Hromkovic, J., Nagl, M., Westfechtel,
  B. (eds.) Graph-Theoretic Concepts in Computer Science, WG 2004. Lecture
  Notes in Computer Science, vol.~3353, pp. 117--128. Springer (2004).
  \doi{10.1007/978-3-540-30559-0\_10}

\bibitem{GareyJ79}
Garey, M.R., Johnson, D.S.: Computers and Intractability: {A} Guide to the
  Theory of NP-Completeness. W. H. Freeman (1979)

\bibitem{GUILLAUME2006795}
Guillaume, J.L., Latapy, M.: Bipartite graphs as models of complex networks.
  Physica A: Statistical Mechanics and its Applications  \textbf{371}(2),
  795--813 (2006). \doi{https://doi.org/10.1016/j.physa.2006.04.047},
  \url{https://www.sciencedirect.com/science/article/pii/S0378437106004638}

\bibitem{yetanotherbio}
Kershenbaum, A., Cutillo, A., Darabos, C., Murray, K., Schiaffino, R., Moore,
  J.H.: Bicliques in graphs with correlated edges: From artificial to
  biological networks. In: Squillero, G., Burelli, P. (eds.) Applications of
  Evolutionary Computation. pp. 138--155. Springer International Publishing,
  Cham (2016)

\bibitem{Khot06}
Khot, S.: Ruling out ptas for graph min‐bisection, dense k‐subgraph, and
  bipartite clique. SIAM Journal on Computing  \textbf{36}(4),  1025--1071
  (2006). \doi{10.1137/S0097539705447037}

\bibitem{BicliqueComm}
Lehmann, S., Schwartz, M., Hansen, L.K.: Biclique communities. Phys. Rev. E
  \textbf{78},  016108 (Jul 2008). \doi{10.1103/PhysRevE.78.016108},
  \url{https://link.aps.org/doi/10.1103/PhysRevE.78.016108}

\bibitem{lozin2002bipartite}
Lozin, V.V.: Bipartite graphs without a skew star. Discrete Mathematics
  \textbf{257}(1),  83--100 (2002)

\bibitem{manurangsi2017inapproximability}
Manurangsi, P.: Inapproximability of maximum edge biclique, maximum balanced
  biclique and minimum k-cut from the small set expansion hypothesis. In: ICALP
  2017, 44th International Colloquium on Automata, Languages, and Programming
  (2017)

\bibitem{pavlopoulos2018bipartite}
Pavlopoulos, G.A., Kontou, P.I., Pavlopoulou, A., Bouyioukos, C., Markou, E.,
  Bagos, P.G.: Bipartite graphs in systems biology and medicine: a survey of
  methods and applications. GigaScience  \textbf{7}(4),  giy014 (2018)

\bibitem{PEETERS2003651}
Peeters, R.: The maximum edge biclique problem is np-complete. Discrete Applied
  Mathematics  \textbf{131}(3),  651--654 (2003).
  \doi{https://doi.org/10.1016/S0166-218X(03)00333-0}

\bibitem{quaddoura2002linear}
Quaddoura, R., Vanhaerpe, J.: Linear time recognition of bipartite star123-free
  graphs. Tech. rep., Technical Report, LaRIA 2002-12 (2002)

\bibitem{collusion}
Song, C., Liu, K., Zhang, X.: Collusion detection and ground truth inference in
  crowdsourcing for labeling tasks. J. Mach. Learn. Res.  \textbf{22}(1) (jan
  2021)

\end{thebibliography}

\end{document}